\documentclass[journal,twoside,10pt]{IEEEtran}
\usepackage{amsmath}
\usepackage{mathrsfs}
\usepackage{amsfonts}
\usepackage[numbers,sort&compress]{natbib}
\usepackage{graphicx,color,overpic}
\usepackage{times}
\usepackage{latexsym}
\usepackage{bm}
\usepackage{amssymb}
\usepackage{cases}
\usepackage{array}
\usepackage{fancyhdr}
\usepackage{setspace}

\usepackage[caption=false,font=footnotesize,labelfont=rm,textfont=rm]{subfig}
\usepackage{subfig}
\usepackage{url}
\usepackage{algorithm}
\usepackage{algorithmic}
\usepackage{multirow}
\allowdisplaybreaks[4]

\usepackage{citesort}
\usepackage{epsfig}
\usepackage{fancybox}
\usepackage{textcomp}
\usepackage{psfrag}
\usepackage{makecell}

\newtheorem{theorem}{Theorem}

\newtheorem{remark}{Remark}

    \def\Complex{{\rm\rule[.23ex]{.03em}{1.1ex}\kern-.3em{C}}}

    \newcommand{\be}{\begin{equation}} \newcommand{\ee}{\end{equation}}
    \newcommand{\bea}{\begin{eqnarray}} \newcommand{\eea}{\end{eqnarray}}
    \newcommand{\benum}{\begin{enumerate}} \newcommand{\eenum}{\end{enumerate}}

    \newcommand{\qa}{{\bf a}}

    \newcommand{\qA}{{\bf A}}

\begin{document}

\title{Non-Square UPA-Enabled XL-MIMO Systems: Anisotropic Near-Field Characterization, Fundamental Limits, and Channel Estimation}
\author{Yilong Liu,~\IEEEmembership{Graduate Student Member,~IEEE},
Xi Yang,~\IEEEmembership{Member,~IEEE},
Jing Xu,

Jun Zhang,~\IEEEmembership{Senior Member,~IEEE},
and Shi Jin,~\IEEEmembership{Fellow,~IEEE}
\thanks{Yilong Liu, Xi Yang, and Jing Xu are with the School of Information and Electronic Engineering, East China Normal University, Shanghai 200241, China (e-mail: yilongliu@stu.ecnu.edu.cn; xyang@cee.ecnu.edu.cn; jxu@ce.ecnu.edu.cn).}
\thanks{Jun Zhang is with the Jiangsu Key Laboratory of Wireless Communications, Nanjing University of Posts and Telecommunications, Nanjing 210003, China (e-mail: zhangjun@njupt.edu.cn).}
\thanks{Shi Jin is with the National Mobile Communications Research Laboratory, Southeast University, Nanjing 210096, China (e-mail: jinshi@seu.edu.cn).}
}
\maketitle

\begin{abstract}
Extremely large-scale multiple-input multiple-output (XL-MIMO) has emerged as a promising technology for enabling next-generation communication systems.
In practice, the deployment of non-square uniform planar arrays (UPAs) fundamentally alters electromagnetic wave propagation characteristics and induces anisotropic beamfocusing capability along different axes due to the aperture disparity originating from the non-square array geometry.
To fully uncover the performance impact of such non-square geometry and thus unleash the potential of the non-square UPAs, we investigate the anisotropic near-field characteristics, fundamental limits, and channel estimation for non-square UPA-enabled XL-MIMO systems.
First, we derive the effective beamfocusing distances for the long and short axes of the array.
Interestingly, the radiation space of a non-square UPA can be partitioned into three regions, i.e., the fully near-field, the anisotropic near-field, and the far-field regions, and the anisotropic region asymptotically dominates the overall near-field space as the array aspect ratio increases.
Then, the asymptotic effective degree of freedom for non-square UPA-enabled XL-MIMO systems is provided, which reveals that distance-domain multiplexing is governed by the long-axis aperture in the large array aspect ratio regime.
Furthermore, the closed-form Cram\'er-Rao bound (CRB) for distance estimation and the three-dimensional (3D) position error bound (PEB) are derived to reveal the geometry-induced performance trade-offs among distance, azimuth, and elevation estimation, based on which the optimal array aspect ratio that minimizes the 3D PEB is determined.
Finally, by exploiting the anisotropic wavefront properties, we design a 3D anisotropic near-field codebook to facilitate low-complexity channel estimation for non-square UPAs.
Numerical results validate that the proposed codebook achieves comparable estimation accuracy to the 3D polar-domain codebook with substantially lower computational complexity.
\end{abstract}

\begin{IEEEkeywords}
    XL-MIMO, near-field, non-square UPA, CRB, channel estimation.
\end{IEEEkeywords}

\section{Introduction}
Extremely large-scale multiple-input multiple-output (XL-MIMO) has emerged as a pivotal technology for sixth-generation wireless networks to meet the stringent requirements of future applications \cite{Lu-24COMST}.
By deploying extremely large aperture arrays (ELAAs), such as uniform linear arrays (ULAs) \cite{XYang-24WCL}, \cite{SLiu-25JSAC} and uniform planar arrays (UPAs) \cite{HLu-22TWC}, \cite{YLiu-26TVT} with hundreds to thousands of antennas, XL-MIMO significantly enhances spatial resolution and spectral efficiency, while facilitating high-precision positioning for sensing scenarios \cite{ZWang-24COMST}.

With the substantial expansion of physical apertures in ELAAs, the channel modeling transitions from the far-field planar wavefront assumption to the near-field spherical wavefront modeling \cite{JAn-24WC}.
Enabled by near-field spherical wavefronts, XL-MIMO systems can achieve beamfocusing, concentrating electromagnetic energy at specific spatial locations.
This property introduces additional spatial degrees of freedom (DoFs) in the distance domain, thereby enhancing spatial multiplexing \cite{ZWu-23JSAC}.
Motivated by this potential, extensive research has been dedicated to near-field XL-MIMO technologies, such as near-field channel estimation \cite{CD-22TCOMM}, \cite{XShi-25TCOM}, beamforming \cite{ZWang-26TWC}, and resource allocation \cite{Bokai-24WC}.
Note that the boundary between the near-field and far-field regions is demarcated by the Rayleigh distance \cite{Balanis-16}, which is derived based on a maximum phase error criterion between the spherical wavefront and its planar approximation.
However, recent studies demonstrate that the Rayleigh distance overestimates the effective range of the near-field region, as the maximum phase error does not directly evaluate the degradation of beamforming gain or transmission capacity \cite{MCui-24TWC}.
To address this limitation, the concept of the effective Rayleigh distance has been introduced to delineate the boundary where the array can practically sustain its beamfocusing capability \cite{Hussain-25TWC}, \cite{Hussain-26TCOMM}.

In practical deployments, UPAs are widely adopted as the prevailing architecture owing to their capability to achieve high antenna integration densities within constrained physical spaces \cite{CChen-26TWC}, \cite{3GPP-38.901}.
Motivated by this adoption, recent studies have focused on the beamfocusing properties and associated performance limits of UPA-enabled near-field XL-MIMO systems.
Specifically, \cite{Kosasih-24TWC} evaluated the near-field beamfocusing and derived the $3\,\mathrm{dB}$ beam depth for uniform rectangular arrays.
The authors in \cite{XChen-26TCOM} studied the distance-focusing characteristics of sparse UPAs and introduced the radial resolution distance to characterize the effective beamfocusing range.
\cite{ZWang-25TWC} derived closed-form effective DoF (EDoF) expressions for UPAs based on the Green's function, revealing that the EDoF primarily depends on the physical aperture rather than the number of antennas.
Moreover, \cite{Hussain-26TCOMM} derived the beam depth for uniform rectangular arrays to uncover the impact of the array geometry on the beamfocusing capability.

To further adapt to specific spatial constraints and coverage requirements, XL-MIMO systems typically deploy non-square UPAs \cite{YuH-19TCOM}, \cite{XYang-26JIOT}.
For instance, non-square UPAs with a larger horizontal aperture but a relatively small vertical aperture are employed to efficiently accommodate users densely distributed in the azimuth plane \cite{3GPP-38.901}, \cite{XYang-26JIOT}.
Note that the beamfocusing capability in a specific direction is governed by the effective projected aperture of the array along that direction.
Thus, for a non-square UPA, the aperture discrepancy between the horizontal and vertical axes alters the wavefront properties, leading to anisotropic beamfocusing capability.
This geometry-induced property renders the Rayleigh distance solely derived from the maximum diagonal aperture inadequate to accurately capture the spatial radiation characteristics.
However, the aforementioned studies on UPAs predominantly rely on this single Rayleigh distance for near-field characterization.
Furthermore, the fundamental limits governed by such anisotropic near-field characteristics remain unexplored.
To address these limitations, we investigate how the non-square UPA geometry affects near-field propagation characteristics and fundamental performance limits.
Specifically, we reveal the anisotropic beamfocusing characteristics and demarcate the corresponding anisotropic near-field region, based on which we analyze spatial multiplexing and parameter estimation performance and develop a codebook to facilitate low-complexity channel estimation.
The primary contributions are summarized as follows.
\begin{itemize}
\item First, we derive the closed-form effective beamfocusing distances for the long and short axes of the non-square UPA by evaluating the normalized array gain.
Based on these boundaries, the radiation space of a non-square UPA can be demarcated into the fully near-field, the anisotropic near-field, and the far-field regions, and the corresponding channel models are also formulated.
It is shown that as the array aspect ratio increases, the anisotropic region asymptotically dominates the overall near-field space, which validates the necessity of the proposed field demarcation for non-square deployments.
\item Second, we investigate the spatial multiplexing capabilities of the non-square UPA by deriving an asymptotic closed-form expression for the EDoF.
Our analysis reveals that the distance-domain multiplexing capacity is primarily determined by the effective physical aperture along the long axis in the large array aspect ratio regime, rather than solely by the number of antennas.
\item Third, we derive the closed-form Cram\'er-Rao bound (CRB) for distance estimation and the three-dimensional (3D) position error bound (PEB) with respect to the array aspect ratio, characterizing the geometry-induced performance trade-offs among distance, azimuth, and elevation estimation.
Based on this analysis, we determine the optimal array aspect ratio that minimizes the 3D PEB.
\item Finally, by exploiting the anisotropic wavefront characteristics, we design a 3D anisotropic near-field (3D-ANF) codebook, which utilizes the discrete fractional Fourier transform (DFrFT) to match the spatial chirp rate along the long axis, and employs the discrete Fourier transform (DFT) for the short axis.
Based on this codebook, the proposed low-complexity ANF-orthogonal matching pursuit (OMP) channel estimation algorithm achieves comparable estimation accuracy to the 3D polar-domain OMP algorithm while significantly reducing the computational complexity.
\end{itemize}

The remainder of this paper is organized as follows.
Section~\ref{s:Model} introduces the considered non-square UPA-enabled XL-MIMO system.
In Section~\ref{s:anisotropy_NF}, the anisotropic near-field characterization is uncovered, and we derive the EDoF in Section~\ref{s:EDoF}.
Then, we analyze the performance bounds for parameter estimation and 3D positioning in Section~\ref{s:Estimation_analysis}.
Section~\ref{s:ChannelEst} proposes a novel codebook design and a low-complexity channel estimation scheme, and finally we conclude this paper in Section~\ref{s:Conclusion}.

\textit{Notations:} Scalars, vectors, and matrices are denoted by non-boldface, boldface lowercase, and boldface uppercase letters, respectively.
Superscripts $(\cdot)^T$, $(\cdot)^H$, and $(\cdot)^{-1}$ denote the transpose, Hermitian transpose, and inverse, respectively.
The Kronecker product, the magnitude operator, and the ceiling operator are denoted by $\otimes$, $|\cdot|$, and $\lceil \cdot \rceil$, respectively.
$\mathbb{E}(\cdot)$, $\mathrm{tr}(\cdot)$, $\mathcal{F}^{(p)}\{\cdot\}$, and $\mathcal{F}^{(1)}\{\cdot\}$ are the statistical expectation, trace, $p$-th order DFrFT, and DFT operations, respectively.
$\mathbb{C}^{M\times N}$ and $\mathbb{R}^{M\times N}$ denote the space of $M \times N$ complex and real matrices, respectively, and $\Re(\cdot)$ extracts the real part.
$[\qA]_{(i,j)}$ denotes the $(i,j)$-th element of the matrix $\qA$.
$[\qa]_i$ and $\|\qa\|$ denote the $i$-th element and the Euclidean norm of the vector $\qa$, respectively.
$\mathbf{0}$ and $\mathbf{I}$ denote the all-zero vector and the identity matrix, respectively.
$\mathcal{CN}(\mu, \sigma^2)$ denotes a complex Gaussian distribution with mean $\mu$ and variance $\sigma^2$.

\section{System Model}\label{s:Model}
\begin{figure}[htbp]
       \centering
       \includegraphics[width=0.28\textwidth]{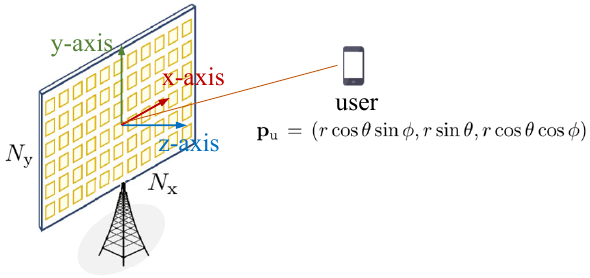}
       \caption{The non-square UPA-enabled XL-MIMO system.}\label{fig:sys_model}
\end{figure}

As depicted in Fig.\,\ref{fig:sys_model}, we consider an XL-MIMO system serving a single-antenna user, where the base station (BS) is equipped with a non-square UPA consisting of $N = N_{\mathrm{x}} \times N_{\mathrm{y}}$ antennas, with $N_{\mathrm{x}}$ and $N_{\mathrm{y}}$ denoting the numbers of antennas along the horizontal and vertical axes, respectively, and $N_{\mathrm{x}} \geq N_{\mathrm{y}}$.
The aspect ratio of the array is defined as $\gamma = \frac{N_{\mathrm{x}}}{N_{\mathrm{y}}}$, and we have $\gamma \geq 1$.
A 3D Cartesian coordinate system is established with the center of the non-square UPA as the origin, where the $x$-axis and $y$-axis align with the horizontal and vertical dimensions of the array, respectively.
Thus, the positions of the $(n_{\mathrm{x}}, n_{\mathrm{y}})$-th antenna at the BS and the user are denoted by $\mathbf{p}_{(n_{\mathrm{x}}, n_{\mathrm{y}})} = (n_{\mathrm{x}} d, n_{\mathrm{y}} d, 0)$ and $\mathbf{p}_{\mathrm{u}} = (r\cos\theta\sin\phi, r\sin\theta, r\cos\theta\cos\phi)$, respectively, where $n_{\mathrm{x}} \in \mathcal{N}_{\mathrm{x}}$ and $n_{\mathrm{y}} \in \mathcal{N}_{\mathrm{y}}$ are the antenna indices with $\mathcal{N}_{\mathrm{x}} \triangleq \{-\frac{N_{\mathrm{x}}-1}{2}, \dots, \frac{N_{\mathrm{x}}-1}{2}\}$ and $\mathcal{N}_{\mathrm{y}} \triangleq \{-\frac{N_{\mathrm{y}}-1}{2}, \dots, \frac{N_{\mathrm{y}}-1}{2}\}$,
$d = \frac{\lambda}{2}$ is the antenna spacing, and $\lambda$ denotes the carrier wavelength.
$r$ represents the distance from the user to the center of the UPA (i.e., the origin), while $\theta \in [-\frac{\pi}{2}, \frac{\pi}{2}]$ and $\phi \in [-\frac{\pi}{2}, \frac{\pi}{2}]$ are the elevation and azimuth angles-of-arrival (AoAs) for the user relative to the BS, respectively.
By applying the Fresnel approximation and omitting the cross-terms, the distance $r_{(n_{\mathrm{x}}, n_{\mathrm{y}})}$ between the $(n_{\mathrm{x}}, n_{\mathrm{y}})$-th antenna at the BS and the user can be expressed as \eqref{eq:r_n}, shown at the top of the next page,
\begin{figure*}
\begin{align}\label{eq:r_n}
    r_{(n_{\mathrm{x}}, n_{\mathrm{y}})} =& \sqrt{(r\cos\theta\sin\phi - n_{\mathrm{x}} d)^2 + (r\sin\theta - n_{\mathrm{y}} d)^2 + (r\cos\theta\cos\phi)^2} \notag\\
    \approx & r - (n_{\mathrm{x}} d u_{\mathrm{x}} + n_{\mathrm{y}} d u_{\mathrm{y}}) + \frac{n_{\mathrm{x}}^2 d^2 (1-u_{\mathrm{x}}^2)}{2r} + \frac{n_{\mathrm{y}}^2 d^2 (1-u_{\mathrm{y}}^2)}{2r}
\end{align}
{\noindent}\rule[-5pt]{17.5cm}{0.05em}
\end{figure*}
where $u_{\mathrm{x}} \triangleq \cos\theta\sin\phi$ and $u_{\mathrm{y}} \triangleq \sin\theta$.
Therefore, the near-field steering vector at the BS, i.e., $\mathbf{a}_{\mathrm{NF}}(r, u_{\mathrm{x}}, u_{\mathrm{y}}) \in \mathbb{C}^{N \times 1}$, is modeled as
\begin{align}
    \mathbf{a}_{\mathrm{NF}}(r, u_{\mathrm{x}}, u_{\mathrm{y}}) =& \frac{1}{\sqrt{N}} \left[e^{-j \frac{2\pi}{\lambda}\left(r_{(-\tilde{N}_{\mathrm{x}}, -\tilde{N}_{\mathrm{y}})}-r\right)},\cdots, \right. \notag\\
    & \qquad \left. e^{-j \frac{2\pi}{\lambda}\left(r_{(\tilde{N}_{\mathrm{x}}, \tilde{N}_{\mathrm{y}})}-r\right)}\right]^T \notag\\
    \overset{\mathrm{(a)}}{\approx} & \mathbf{a}_{\mathrm{x,NF}}(r, u_{\mathrm{x}}) \otimes \mathbf{a}_{\mathrm{y,NF}}(r, u_{\mathrm{y}}),
\end{align}
where the approximation $\mathrm{(a)}$ follows from \eqref{eq:r_n}, $\tilde{N}_{\mathrm{x}} \triangleq \frac{N_{\mathrm{x}}-1}{2}$, and $\tilde{N}_{\mathrm{y}} \triangleq \frac{N_{\mathrm{y}}-1}{2}$.
$\mathbf{a}_{\mathrm{x,NF}}(r, u_{\mathrm{x}}) \in \mathbb{C}^{N_{\mathrm{x}} \times 1}$ and $\mathbf{a}_{\mathrm{y,NF}}(r, u_{\mathrm{y}}) \in \mathbb{C}^{N_{\mathrm{y}} \times 1}$ denote the near-field steering vectors along the horizontal and vertical axes, respectively, which are given by
\begin{subequations}
    \begin{align}
        [\mathbf{a}_{\mathrm{x,NF}}(r, u_{\mathrm{x}})]_i =& \frac{1}{\sqrt{N_{\mathrm{x}}}} e^{-j \frac{2\pi}{\lambda} \left( -n_{\mathrm{x},i} d u_{\mathrm{x}} + \frac{n_{\mathrm{x},i}^2 d^2 (1-u_{\mathrm{x}}^2)}{2r} \right) }, \label{eq:ax}\\
        [\mathbf{a}_{\mathrm{y,NF}}(r, u_{\mathrm{y}})]_j =& \frac{1}{\sqrt{N_{\mathrm{y}}}} e^{-j \frac{2\pi}{\lambda} \left( -n_{\mathrm{y},j} d u_{\mathrm{y}} + \frac{n_{\mathrm{y},j}^2 d^2 (1-u_{\mathrm{y}}^2)}{2r} \right) }, \label{eq:ay}
    \end{align}
\end{subequations}
where $n_{\mathrm{x},i}$ and $n_{\mathrm{y},j}$ denote the $i$-th and $j$-th elements of the ordered sets $\mathcal{N}_{\mathrm{x}}$ and $\mathcal{N}_{\mathrm{y}}$, respectively.

The radiation space surrounding an antenna array is conventionally partitioned into the near-field and far-field regions,\footnote{Since the reactive near-field is confined within a few wavelengths from the array surface, its impact is practically negligible for typical wireless communication systems \cite{ZXu-25TWC}.
As a result, the near-field throughout this paper specifically refers to the radiative near-field.} with the Rayleigh distance widely adopted as the boundary \cite{Balanis-16}.
For a BS equipped with a UPA, existing literature generally leverages the maximum physical aperture along the main diagonal with length $D = d\sqrt{(N_{\mathrm{x}}-1)^2 + (N_{\mathrm{y}}-1)^2}$ to characterize the corresponding Rayleigh distance $r_{\mathrm{Rayleigh}} = \frac{2D^2}{\lambda}$.
However, the aperture discrepancy between the two axes of a non-square UPA inevitably induces anisotropic beamfocusing capability.
Consequently, relying on the Rayleigh distance derived solely from the maximum diagonal aperture cannot accurately capture the spatial radiation characteristics.
To address this limitation, we propose decoupling the Rayleigh distance of non-square UPAs along the long and short axes and uncover the anisotropic near-field region in the next section.

\section{Anisotropic Near-Field Characterization}\label{s:anisotropy_NF}
In the near-field region, beamforming techniques enable the BS to achieve beamfocusing, which concentrates the electromagnetic energy within a confined spatial region, manifesting as a finite beamwidth and beam depth \cite{JAn-24WC}.
Unlike the isotropic beamfocusing of square arrays, the disparity in physical apertures between the long and short axes of the non-square UPA implies that the short axis loses its beamfocusing capability at a much shorter distance than the long axis.
Such a phenomenon potentially gives rise to a unique region where the user resides in the near-field of the long axis while appearing in the far-field of the short axis.
Fundamental questions thus arise: \textit{Does such a region physically exist and how does the array aspect ratio govern its boundaries?}
To address these, we analyze the beamfocusing mechanism of the non-square UPA and reveal the spatial partitioning of the anisotropic near-field.

To characterize the beamfocusing capability of the non-square UPA, the normalized array gain can be defined as $g(r_{\mathrm{F}}, u_{\mathrm{x}}, u_{\mathrm{y}})$, as follows
\begin{align}\label{eq:g0}
    g(r_{\mathrm{F}}, u_{\mathrm{x}}, u_{\mathrm{y}}) \triangleq \left| \mathbf{w}^H(r_{\mathrm{F}}, u_{\mathrm{x}}, u_{\mathrm{y}}) \mathbf{a}_{\mathrm{NF}}(r, u_{\mathrm{x}}, u_{\mathrm{y}}) \right|^2,
\end{align}
where $\mathbf{w}(r_{\mathrm{F}}, u_{\mathrm{x}}, u_{\mathrm{y}}) \in \mathbb{C}^{N \times 1}$ is the beamforming vector.
Without loss of generality, we assume $\mathbf{w}(r_{\mathrm{F}}, u_{\mathrm{x}}, u_{\mathrm{y}}) = \mathbf{a}_{\mathrm{NF}}(r_{\mathrm{F}}, u_{\mathrm{x}}, u_{\mathrm{y}})$.
By leveraging the mixed-product property of the Kronecker product, i.e., $(\mathbf{A} \otimes \mathbf{B})^H (\mathbf{C} \otimes \mathbf{D}) = (\mathbf{A}^H \mathbf{C}) \otimes (\mathbf{B}^H \mathbf{D})$ for arbitrary matrices $\mathbf{A}$, $\mathbf{B}$, $\mathbf{C}$, and $\mathbf{D}$, \eqref{eq:g0} can be rewritten as
\begin{align}\label{eq:gain}
    g(r_{\mathrm{F}}, u_{\mathrm{x}}, u_{\mathrm{y}}) =& \left| \mathbf{a}_{\mathrm{x,NF}}^H(r_{\mathrm{F}}, u_{\mathrm{x}}) \mathbf{a}_{\mathrm{x,NF}}(r, u_{\mathrm{x}}) \right|^2 \notag\\
    & \times \left| \mathbf{a}_{\mathrm{y,NF}}^H(r_{\mathrm{F}}, u_{\mathrm{y}}) \mathbf{a}_{\mathrm{y,NF}}(r, u_{\mathrm{y}}) \right|^2 \notag\\
    =& g_{\mathrm{x}}(r_{\mathrm{F}}, u_{\mathrm{x}}) g_{\mathrm{y}}(r_{\mathrm{F}}, u_{\mathrm{y}}),
\end{align}
where $g_{\mathrm{x}}(r_{\mathrm{F}}, u_{\mathrm{x}}) \triangleq | \mathbf{a}_{\mathrm{x,NF}}^H(r_{\mathrm{F}}, u_{\mathrm{x}}) \mathbf{a}_{\mathrm{x,NF}}(r, u_{\mathrm{x}}) |^2$ and $g_{\mathrm{y}}(r_{\mathrm{F}}, u_{\mathrm{y}}) \triangleq | \mathbf{a}_{\mathrm{y,NF}}^H(r_{\mathrm{F}}, u_{\mathrm{y}}) \mathbf{a}_{\mathrm{y,NF}}(r, u_{\mathrm{y}}) |^2$ denote the normalized array gains corresponding to the long and short axes, respectively.
As observed from \eqref{eq:gain}, the normalized array gain of the UPA can be decoupled into the product of the individual gains along the $x$ (long) and $y$ (short) axes.
This decoupling property implies that the spatial radiation characteristics and the effective Rayleigh distances for the UPA can be evaluated independently along the $x$- and $y$-axes.
Such a decoupled evaluation is crucial for analyzing non-square UPAs, where the aperture disparity between the long and short axes dictates distinct beamfocusing capabilities.
Substituting \eqref{eq:ax} into $g_{\mathrm{x}}(r_{\mathrm{F}}, u_{\mathrm{x}})$ and applying the continuous aperture approximation yields
\begin{align}\label{eq:gain_Fresnel}
    g_{\mathrm{x}}(r_{\mathrm{F}}, u_{\mathrm{x}}) =& \left| \frac{1}{N_{\mathrm{x}}} \sum_{n_{\mathrm{x}}} e^{ j \frac{2\pi}{\lambda} \frac{n_{\mathrm{x}}^2 d^2 (1-u_{\mathrm{x}}^2)}{2} \left( \frac{1}{r_{\mathrm{F}}} - \frac{1}{r} \right) } \right|^2 \notag\\
    \approx & \left| \int_{-1/2}^{1/2} e^{ j \frac{\pi N_{\mathrm{x}}^2 d^2 (1-u_{\mathrm{x}}^2)}{\lambda} z_{\mathrm{eff}} t^2 } dt \right|^2 \notag\\
    =& \frac{C^2(\eta_{\mathrm{x}}) + S^2(\eta_{\mathrm{x}})}{\eta_{\mathrm{x}}^2},
\end{align}
where $z_{\mathrm{eff}} \triangleq | \frac{1}{r_{\mathrm{F}}} - \frac{1}{r}|$, $t \triangleq \frac{n_{\mathrm{x}}}{N_{\mathrm{x}}}$, and $\eta_{\mathrm{x}} \triangleq \sqrt{\frac{N_{\mathrm{x}}^2 d^2 (1-u_{\mathrm{x}}^2)}{2\lambda}z_{\mathrm{eff}}}$.
The terms $C(\eta)$ and $S(\eta)$ denote the Fresnel integral functions, defined as $C(\eta) = \int_0^\eta \cos(\frac{\pi}{2}v^2)dv$ and $S(\eta) = \int_0^\eta \sin(\frac{\pi}{2}v^2)dv$.
Similarly, $g_{\mathrm{y}}(r_{\mathrm{F}}, u_{\mathrm{y}})$ can be obtained as
\begin{align}
    g_{\mathrm{y}}(r_{\mathrm{F}}, u_{\mathrm{y}}) = \frac{C^2(\eta_{\mathrm{y}}) + S^2(\eta_{\mathrm{y}})}{\eta_{\mathrm{y}}^2},
\end{align}
where $\eta_{\mathrm{y}} \triangleq \sqrt{\frac{N_{\mathrm{y}}^2 d^2 (1-u_{\mathrm{y}}^2)}{2\lambda}z_{\mathrm{eff}}}$.
It is observed that the normalized array gain $g(r_{\mathrm{F}}, u_{\mathrm{x}}, u_{\mathrm{y}})$ can be decoupled into the product of $g_{\mathrm{x}}(r_{\mathrm{F}}, u_{\mathrm{x}})$ and $g_{\mathrm{y}}(r_{\mathrm{F}}, u_{\mathrm{y}})$.

\begin{figure}[htbp]
       \centering
       \includegraphics[width=0.23\textwidth]{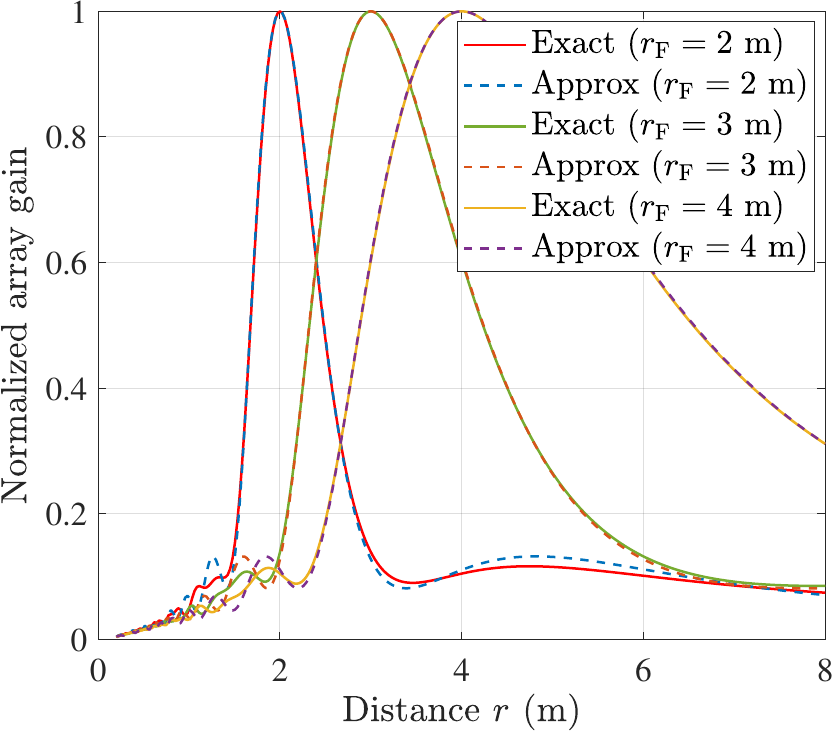}
       \caption{Normalized array gain versus the observation distance $r$ for different focal distances, i.e., $r_{\mathrm{F}} = 2.0$, $3.0$, and $4.0\,\mathrm{m}$.}\label{fig:Gain_r}
\end{figure}

Fig.\,\ref{fig:Gain_r} illustrates the normalized array gain versus the observation distance for different focal distances, where $N_{\mathrm{x}} = 128$, $N_{\mathrm{y}} = 16$, $\theta = 30^{\circ}$, $\phi = 30^{\circ}$, and the carrier frequency is $28\,\mathrm{GHz}$.
As observed, the normalized array gain exhibits distinct peaks at the focal distances and attenuates elsewhere, demonstrating that the non-square UPA achieves near-field beamfocusing in the distance domain.
Furthermore, the normalized array gain curves evaluated using the approximated distance in \eqref{eq:r_n} perfectly align with those obtained via the exact spatial distance.
This alignment reveals that the phase errors introduced by the Fresnel approximation and the omission of cross terms are negligible.
Consequently, it verifies the decoupling of $g(r_{\mathrm{F}}, u_{\mathrm{x}}, u_{\mathrm{y}})$ into the product of $g_{\mathrm{x}}(r_{\mathrm{F}}, u_{\mathrm{x}})$ and $g_{\mathrm{y}}(r_{\mathrm{F}}, u_{\mathrm{y}})$, which are directly determined by the aperture sizes of the $x$- and $y$-axes, respectively.
For $\gamma > 1$, the discrepancy in aperture sizes between the two axes yields distinct beamfocusing capabilities.
Specifically, the long axis sustains effective beamfocusing at a substantially greater distance than the short axis.
This disparity uncovers an anisotropic effective beamfocusing region.
To quantitatively characterize this anisotropic beamfocusing property, we establish the following theorem.

\begin{theorem}\label{the:Ry}
    When the focal distance $r_{\mathrm{F}}$ exceeds a boundary $R_{\mathrm{y}}$, the short axis loses its beamfocusing capability, i.e., its $3\,\mathrm{dB}$ beam depth $r_{\mathrm{BD,y}}$ approaches infinity.
    This boundary is defined as the effective beamfocusing distance of the short axis, which is given by
    \begin{align}\label{eq:Ry}
        R_{\mathrm{y}} = \frac{N_{\mathrm{y}}^2 d^2 (1-u_{\mathrm{y}}^2)}{2\lambda \eta_0^2},
    \end{align}
    where $\eta_0$ represents the primary root of the half-power equation $\frac{C^2(\eta_{\mathrm{y}}) + S^2(\eta_{\mathrm{y}})}{\eta_{\mathrm{y}}^2} = 0.5$.
\end{theorem}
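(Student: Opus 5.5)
We work directly from the decoupled short-axis gain $g_{\mathrm{y}}(r_{\mathrm{F}}, u_{\mathrm{y}}) = \frac{C^2(\eta_{\mathrm{y}}) + S^2(\eta_{\mathrm{y}})}{\eta_{\mathrm{y}}^2}$ obtained in \eqref{eq:gain} and its analogue of \eqref{eq:gain_Fresnel}. The $3\,\mathrm{dB}$ beam depth is the set of observation distances $r$ around $r_{\mathrm{F}}$ on which the gain is at least $0.5$. We write $\kappa_{\mathrm{y}} \triangleq \frac{N_{\mathrm{y}}^2 d^2(1-u_{\mathrm{y}}^2)}{2\lambda}$, so that $\eta_{\mathrm{y}}^2 = \kappa_{\mathrm{y}} z_{\mathrm{eff}}$ with $z_{\mathrm{eff}} = |1/r_{\mathrm{F}} - 1/r|$.

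The first step is to show that $f(\eta) \triangleq (C^2(\eta)+S^2(\eta))/\eta^2$ equals $1$ at $\eta \to 0$ and decreases on $[0,\eta_0]$, where $\eta_0$ is the first crossing of $0.5$. By the definition of the primary root, $f(\eta) \ge 0.5$ if and only if $\eta \le \eta_0$ on the main lobe. Near $0$, Taylor expansion gives $f(\eta) = 1 - c\eta^4 + \cdots$, so $f$ decreases there. For the rest of the interval I would use $f(\eta) = |\int_0^1 e^{j\pi \eta^2 s^2/2} ds|^2$ and a numerical or monotonicity argument up to $\eta_0 \approx 1.18$; this is standard for the Fresnel integrals.

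The half-power condition therefore becomes $|1/r_{\mathrm{F}} - 1/r| \le \eta_0^2/\kappa_{\mathrm{y}}$. This gives the two edges of the beam:
\begin{align}
\frac{1}{r_{\mathrm{F}}} - \frac{\eta_0^2}{\kappa_{\mathrm{y}}} \le \frac{1}{r} \le \frac{1}{r_{\mathrm{F}}} + \frac{\eta_0^2}{\kappa_{\mathrm{y}}}.
\end{align}
The near edge $r_{\mathrm{min}} = \frac{r_{\mathrm{F}}\kappa_{\mathrm{y}}}{\kappa_{\mathrm{y}} + r_{\mathrm{F}}\eta_0^2}$ is always finite. The far edge is $r_{\mathrm{max}} = \frac{r_{\mathrm{F}}\kappa_{\mathrm{y}}}{\kappa_{\mathrm{y}} - r_{\mathrm{F}}\eta_0^2}$ only when the left-hand bound is positive. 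Hence the beam depth is
\begin{align}
r_{\mathrm{BD,y}} = r_{\mathrm{max}} - r_{\mathrm{min}} = \frac{2 r_{\mathrm{F}}^2 \kappa_{\mathrm{y}} \eta_0^2}{\kappa_{\mathrm{y}}^2 - r_{\mathrm{F}}^2 \eta_0^4}, \qquad r_{\mathrm{F}} < \kappa_{\mathrm{y}}/\eta_0^2.
\end{align}
This expression diverges as $r_{\mathrm{F}} \uparrow \kappa_{\mathrm{y}}/\eta_0^2$. For $r_{\mathrm{F}} \ge \kappa_{\mathrm{y}}/\eta_0^2$, the left-hand bound is nonpositive, so every $r > r_{\mathrm{min}}$, including $r \to \infty$, satisfies the half-power condition and the beam depth is infinite. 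The threshold is $R_{\mathrm{y}} = \kappa_{\mathrm{y}}/\eta_0^2$, which is exactly \eqref{eq:Ry}.

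I expect the main obstacle to be the first step. One must rigorously justify that $\{\eta : f(\eta) \ge 0.5\}$ near the main lobe is exactly $[0,\eta_0]$, so that the later sidelobes of $f$ never return above $0.5$. If they did, the $3\,\mathrm{dB}$ region would be disconnected. I would handle this by bounding $f(\eta) \le \frac{C^2+S^2}{\eta^2} \le \frac{c'}{\eta^2}$ for large $\eta$, using the boundedness of the Fresnel integrals, and checking the intermediate range numerically. A secondary point is to state explicitly that the continuous-aperture approximation of \eqref{eq:gain_Fresnel} is assumed.
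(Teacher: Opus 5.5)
Your proposal is correct and follows the paper's proof essentially step for step: the half-power condition is reduced to $|1/r_{\mathrm{F}}-1/r|\le \eta_0^2/\kappa_{\mathrm{y}}$, then come the two beam edges, the beam-depth formula, and the vanishing denominator at $r_{\mathrm{F}}=\kappa_{\mathrm{y}}/\eta_0^2$, and your explicit treatment of $r_{\mathrm{F}}\ge R_{\mathrm{y}}$ (where the paper's beam-depth expression turns negative) and of the sidelobe question tightens the paper's appeal to the main lobe. One numerical aside does not affect the argument: the half-power root of $(C^2(\eta)+S^2(\eta))/\eta^2$ is $\eta_0\approx 1.32$, not $1.18$, since the gain is about $0.625$ at $\eta=1.2$ and $0.52$ at $\eta=1.3$.
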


\begin{IEEEproof}
See Appendix~\ref{proof:the:Ry}.
\end{IEEEproof}

Similarly, the effective beamfocusing distance of the long axis, denoted by $R_{\mathrm{x}}$, can be expressed as
\begin{align}\label{eq:Rx}
    R_{\mathrm{x}} = \frac{N_{\mathrm{x}}^2 d^2 (1-u_{\mathrm{x}}^2)}{2\lambda \eta_0^2}.
\end{align}

\begin{figure}[htbp]
       \centering
       \subfloat[]{
       \includegraphics[width=0.23\textwidth]{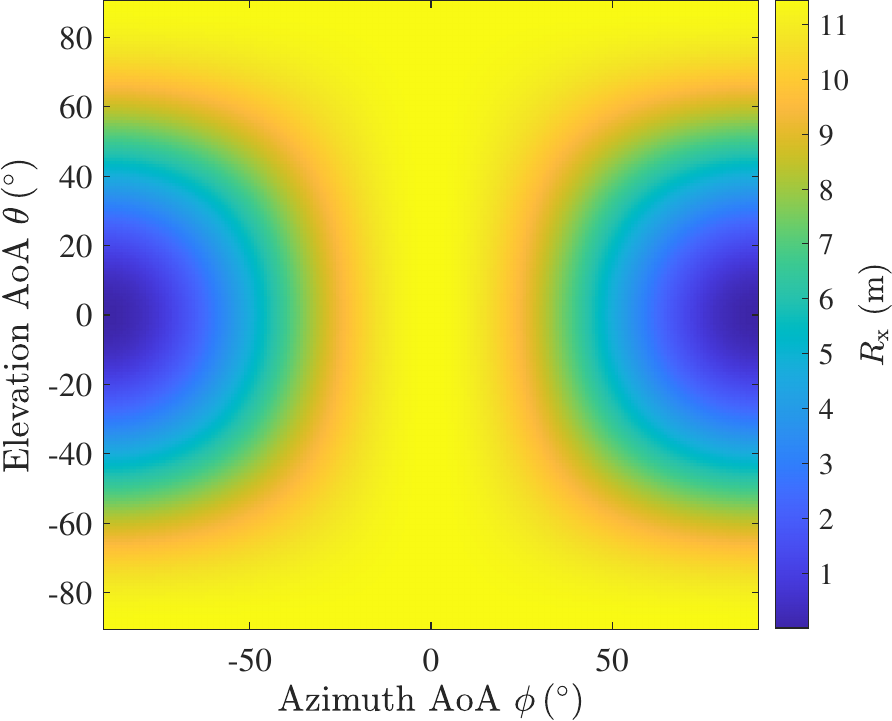}}
       \subfloat[]{
       \includegraphics[width=0.23\textwidth]{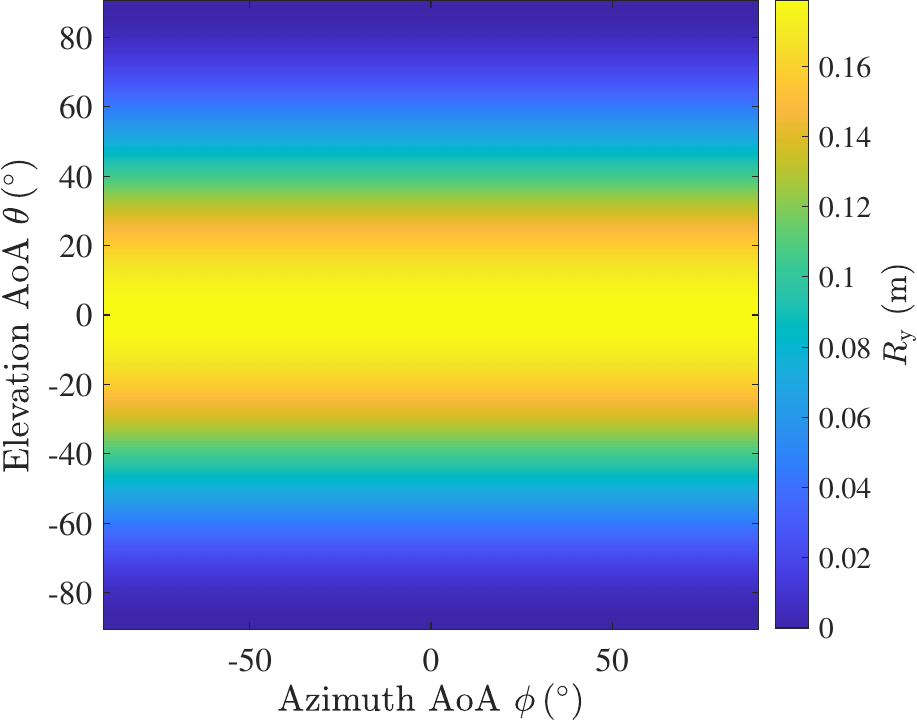}}\\
       \caption{The spatial distribution of the effective beamfocusing distances for the long and short axes (i.e., $R_{\mathrm{x}}$ and $R_{\mathrm{y}}$) in the 2D angular domain.}\label{fig:Boundary_hotmap}
\end{figure}

Fig.\,\ref{fig:Boundary_hotmap} depicts the spatial distribution of the effective beamfocusing distances for the long and short axes (i.e., $R_{\mathrm{x}}$ and $R_{\mathrm{y}}$) in the two-dimensional (2D) angular domain, with $N_{\mathrm{x}} = 128$, $N_{\mathrm{y}} = 16$, and a carrier frequency of $28\,\mathrm{GHz}$.
It is observed that both $R_{\mathrm{x}}$ and $R_{\mathrm{y}}$ are highly sensitive to the AoAs.
Specifically, $R_{\mathrm{y}}$ depends solely on the elevation AoA, peaking at $0^{\circ}$ and decreasing monotonically as the deviation increases.
In contrast, $R_{\mathrm{x}}$ is jointly dictated by both the elevation and azimuth AoAs, decreasing rapidly when the observation direction deviates from the broadside.
This phenomenon occurs because the beamfocusing capability in a specific direction is governed by the effective projected aperture in that direction.
Furthermore, across the majority of the 2D angular domain, $R_{\mathrm{x}}$ is substantially larger than $R_{\mathrm{y}}$, highlighting the disparity in beamfocusing capabilities between the two axes of the non-square UPA.\footnote{Note that the condition $R_{\mathrm{y}} > R_{\mathrm{x}}$ can potentially hold when $\theta \rightarrow 0^{\circ}$ and $\phi \rightarrow \pm 90^{\circ}$.
Since this angular range falls outside the typical service sectors of cellular networks \cite{3GPP-38.901}, it is omitted from our analysis.}
To precisely capture the anisotropic beamfocusing properties of the non-square UPA, we establish the following remark.

\begin{remark}
    For a given observation direction $(u_{\mathrm{x}}, u_{\mathrm{y}})$, the radiation field of a non-square UPA can be partitioned into three distinct regions: the fully near-field region ($r < R_{\mathrm{y}}$), the anisotropic near-field region ($R_{\mathrm{y}} \leq r < R_{\mathrm{x}}$), and the far-field region ($r \geq R_{\mathrm{x}}$).
\end{remark}

Note that, different from the Rayleigh distance which is defined based on a wavefront phase-error criterion, we capitalize on the proposed boundaries $R_{\mathrm{y}}$ and $R_{\mathrm{x}}$ that characterize the beamfocusing capability to demarcate the radiation regions for the non-square UPA.
In the fully near-field region ($r < R_{\mathrm{y}}$), the beamfocusing of the non-square UPA is jointly governed by both the long and short axes, with the steering vectors of both axes modeled by spherical wavefronts.
In the anisotropic near-field region ($R_{\mathrm{y}} \leq r < R_{\mathrm{x}}$), the short axis can no longer sustain its beamfocusing capability, leaving the beamfocusing capability of the array exclusively governed by the long axis.
Within this region, the long-axis steering vector maintains a spherical wavefront model, whereas the short-axis steering vector can be approximated by a planar wavefront model, and thus the steering vector of the non-square UPA, i.e., $\mathbf{a}_{\mathrm{ANF}}(r, u_{\mathrm{x}}, u_{\mathrm{y}}) \in \mathbb{C}^{N \times 1}$, can be formulated as
\begin{align}
    \mathbf{a}_{\mathrm{ANF}}(r, u_{\mathrm{x}}, u_{\mathrm{y}}) = \mathbf{a}_{\mathrm{x,NF}}(r, u_{\mathrm{x}}) \otimes \mathbf{a}_{\mathrm{y,FF}}(u_{\mathrm{y}}),
\end{align}
where $\mathbf{a}_{\mathrm{y,FF}}(u_{\mathrm{y}}) \in \mathbb{C}^{N_{\mathrm{y}} \times 1}$ denotes the short-axis far-field steering vector, given by
\begin{align}
    [\mathbf{a}_{\mathrm{y,FF}}(u_{\mathrm{y}})]_j = \frac{1}{\sqrt{N_{\mathrm{y}}}} e^{j \frac{2\pi}{\lambda} n_{\mathrm{y},j} d u_{\mathrm{y}} }.
\end{align}
In the far-field region ($r \geq R_{\mathrm{x}}$), neither the long axis nor the short axis can achieve beamfocusing, causing their individual steering vectors to degenerate into planar wavefront models.
Therefore, the far-field steering vector of the non-square UPA, i.e., $\mathbf{a}_{\mathrm{FF}}(u_{\mathrm{x}}, u_{\mathrm{y}}) \in \mathbb{C}^{N \times 1}$, is written as
\begin{align}
    \mathbf{a}_{\mathrm{FF}}(u_{\mathrm{x}}, u_{\mathrm{y}}) = \mathbf{a}_{\mathrm{x,FF}}(u_{\mathrm{x}}) \otimes \mathbf{a}_{\mathrm{y,FF}}(u_{\mathrm{y}}),
\end{align}
where $\mathbf{a}_{\mathrm{x,FF}}(u_{\mathrm{x}}) \in \mathbb{C}^{N_{\mathrm{x}} \times 1}$ represents the long-axis far-field steering vector, defined as
\begin{align}
    [\mathbf{a}_{\mathrm{x,FF}}(u_{\mathrm{x}})]_i = \frac{1}{\sqrt{N_{\mathrm{x}}}} e^{j \frac{2\pi}{\lambda} n_{\mathrm{x},i} d u_{\mathrm{x}} }.
\end{align}

\begin{figure}[htbp]
       \centering
       \subfloat[]{
       \includegraphics[width=0.23\textwidth]{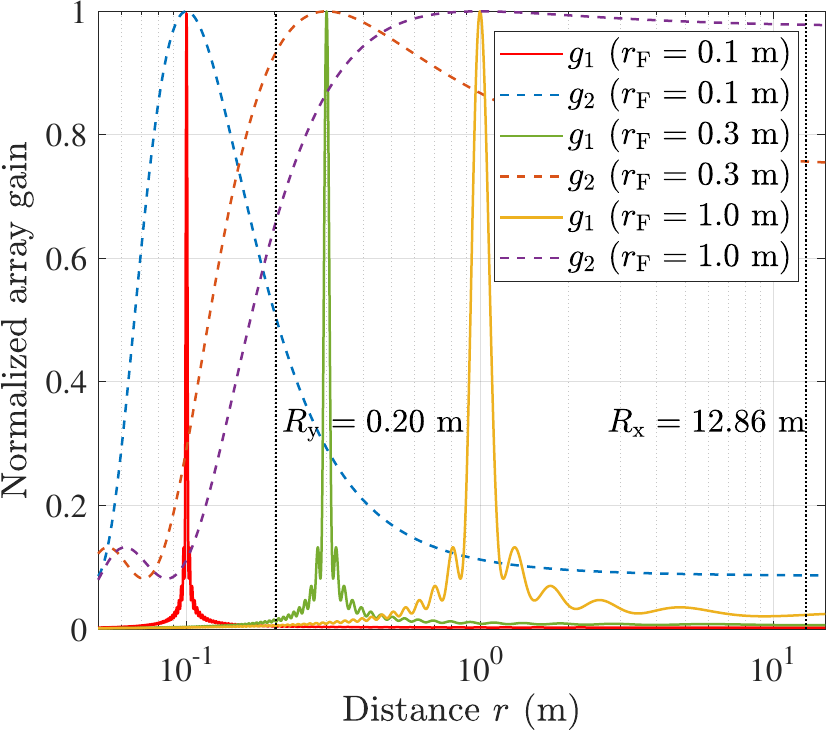}}
       \subfloat[]{
       \includegraphics[width=0.235\textwidth]{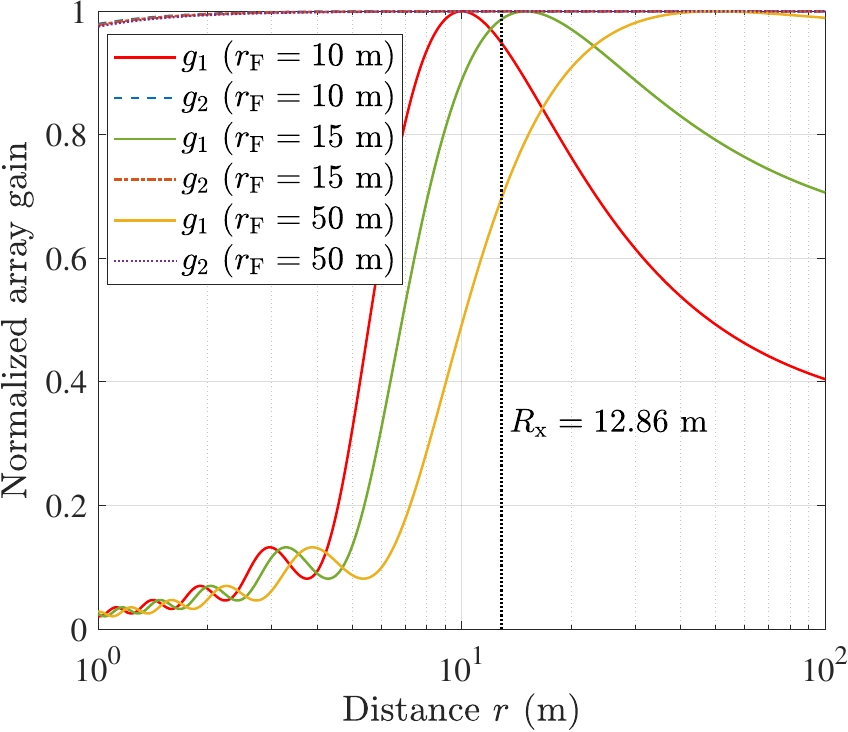}}\\
       \caption{The normalized array gains of the long axis and the short axis versus the observation distance $r$ along the broadside direction under various focal distances.}\label{fig:Gain_xy}
\end{figure}

To evaluate the beamfocusing performance, Fig.\,\ref{fig:Gain_xy} plots the normalized array gains of the long axis and the short axis versus the observation distance along the broadside direction (i.e., $\theta = 0^{\circ}$ and $\phi = 0^{\circ}$) under various focal distances, where $N_{\mathrm{x}} = 128$, $N_{\mathrm{y}} = 16$, and the carrier frequency is $28\,\mathrm{GHz}$.
As shown in Fig.\,\ref{fig:Gain_xy}(a), when the focal distance falls within the fully near-field region ($r_{\mathrm{F}} = 0.1\,\mathrm{m} < R_{\mathrm{y}}$), the normalized array gains of both axes exhibit sharp peaks at the focal distance, featuring narrow $3\,\mathrm{dB}$ beam depths.
This confirms that both axes possess beamfocusing capabilities.
When the focal distance resides in the anisotropic near-field region ($R_{\mathrm{y}} \leq r_{\mathrm{F}} < R_{\mathrm{x}}$, e.g., $r_{\mathrm{F}} = 0.3\,\mathrm{m}$ and $1.0\,\mathrm{m}$), the long axis maintains beamfocusing characteristics.
However, the short-axis normalized array gain fails to form a sharp peak (i.e., its $3\,\mathrm{dB}$ beam depth extends to infinity), rendering the short axis incapable of effectively confining the radiated energy within a finite distance interval.
Consequently, the short-axis steering vector degenerates into a planar wavefront model.
Furthermore, as depicted in Fig.\,\ref{fig:Gain_xy}(b), when the focal distance crosses $R_{\mathrm{x}}$ into the far-field region ($r_{\mathrm{F}} > R_{\mathrm{x}}$, e.g., $r_{\mathrm{F}} = 15\,\mathrm{m}$ and $50\,\mathrm{m}$), the long axis likewise fails to sustain its beamfocusing capability.
At this stage, the non-square UPA completely lacks beamfocusing capability, causing the steering vectors of both axes to degenerate into planar wavefront models.
In summary, the observed beamfocusing evolution corroborates the validity of partitioning the radiation field into the fully near-field, the anisotropic near-field, and the far-field regions using $R_{\mathrm{y}}$ and $R_{\mathrm{x}}$ as the boundaries.

To quantify the spatial extent of the anisotropic near-field region, we define $K(\gamma)$ as the normalized error of the effective beamfocusing distance between the long axis and the entire array, and $\bar{K}(\gamma)$ as the depth ratio of the anisotropic near-field region relative to the entire near-field region, which are formulated as
\begin{align}
    K(\gamma) \triangleq \frac{R_{\mathrm{array}} - R_{\mathrm{x}}}{R_{\mathrm{array}}},\;
    \bar{K}(\gamma) \triangleq \frac{R_{\mathrm{x}} - R_{\mathrm{y}}}{R_{\mathrm{array}}},
\end{align}
where $R_{\mathrm{array}}$ represents the effective beamfocusing distance of the entire array \cite{Hussain-26TCOMM}.
Accordingly, we establish the following theorem.

\begin{theorem}\label{the:K}
    Along the broadside direction, $K(\gamma)$ asymptotically decays at a rate of $\mathcal{O}(\gamma^{-4})$ as $\gamma \to \infty$, yielding
    \begin{align}
        K(\gamma) \approx \frac{\pi^2\eta_0^3}{45|F'(\eta_0)|\gamma^4},
    \end{align}
    where $F'(v)$ denotes the first derivative of the gain function $F(v) \triangleq \frac{C^2(v) + S^2(v)}{v^2}$. In addition, as $\gamma \to \infty$, $\bar{K}(\gamma)$ asymptotically approaches
    \begin{align}
        \bar{K}(\gamma) \approx 1 - \frac{1}{\gamma^2}.
    \end{align}
\end{theorem}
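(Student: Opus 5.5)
The plan is to express all three distances $R_{\mathrm{x}}$, $R_{\mathrm{y}}$, $R_{\mathrm{array}}$ through the same broadside scaling and then do a perturbation expansion of the half-power condition in $1/\gamma$.

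\textbf{Reduction to a scalar equation.} Along the broadside direction we have $u_{\mathrm{x}}=u_{\mathrm{y}}=0$. By the decoupling in \eqref{eq:gain} and the Fresnel-integral evaluation \eqref{eq:gain_Fresnel}, the full-array normalized gain is $F(\eta_{\mathrm{x}})F(\eta_{\mathrm{y}})$, with $\eta_{\mathrm{x}}^2=\frac{N_{\mathrm{x}}^2d^2}{2\lambda}z_{\mathrm{eff}}$ and $\eta_{\mathrm{y}}=\eta_{\mathrm{x}}/\gamma$. I take $R_{\mathrm{array}}$, following \cite{Hussain-26TCOMM} and the argument of Theorem~\ref{the:Ry}, to be the focal distance at which the whole-array half-power equation first loses its finite root. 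Letting $z_{\mathrm{eff}}=1/r_{\mathrm{F}}$ as $r\to\infty$, this gives $R_{\mathrm{array}}=\frac{N_{\mathrm{x}}^2d^2}{2\lambda\eta_\ast^2}$. Here $\eta_\ast$ is the primary root of
\begin{align}
F(\eta_\ast)\,F(\eta_\ast/\gamma)=\tfrac12 .
\end{align}
Comparing with \eqref{eq:Rx}, this yields $K(\gamma)=1-\eta_\ast^2/\eta_0^2$. Since $F(\eta_\ast/\gamma)<1$, we need $F(\eta_\ast)>1/2$, so $\eta_\ast<\eta_0$ and $K>0$.

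\textbf{Small-argument expansion.} As $\gamma\to\infty$, the argument $\eta_\ast/\gamma\to0$, so I expand $F$ near zero. From the series of the Fresnel integrals,
\begin{align}
C(v)=v-\tfrac{\pi^2}{40}v^5+\cdots,\qquad S(v)=\tfrac{\pi}{6}v^3+\cdots,
\end{align}
so
\begin{align}
C^2+S^2=v^2-\pi^2\bigl(\tfrac1{20}-\tfrac1{36}\bigr)v^6+\cdots=v^2-\tfrac{\pi^2}{45}v^6+\cdots .
\end{align}
Hence $F(v)=1-\frac{\pi^2}{45}v^4+\mathcal{O}(v^8)$. The scalar equation then becomes
\begin{align}
F(\eta_\ast)\approx\tfrac12\Bigl(1+\tfrac{\pi^2\eta_\ast^4}{45\gamma^4}\Bigr).
\end{align}

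\textbf{Linearization about $\eta_0$.} Write $\eta_\ast=\eta_0-\delta$. Because $F(\eta_0)=1/2$ and $F$ is decreasing there, $F(\eta_\ast)\approx\frac12+|F'(\eta_0)|\delta$. Matching the two sides to leading order, and replacing $\eta_\ast$ by $\eta_0$ in the $\gamma^{-4}$ term, gives
\begin{align}
\delta\approx\frac{\pi^2\eta_0^4}{90|F'(\eta_0)|\gamma^4}.
\end{align}
Then $K=1-(1-\delta/\eta_0)^2\approx 2\delta/\eta_0$, which is the claimed $\frac{\pi^2\eta_0^3}{45|F'(\eta_0)|\gamma^4}$ and confirms the $\mathcal{O}(\gamma^{-4})$ decay.

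\textbf{The second claim.} At broadside, \eqref{eq:Ry} and \eqref{eq:Rx} give $R_{\mathrm{y}}/R_{\mathrm{x}}=N_{\mathrm{y}}^2/N_{\mathrm{x}}^2=\gamma^{-2}$. Therefore
\begin{align}
\bar K(\gamma)=\frac{R_{\mathrm{x}}}{R_{\mathrm{array}}}\bigl(1-\gamma^{-2}\bigr)=(1-K(\gamma))\bigl(1-\gamma^{-2}\bigr),
\end{align}
which equals $1-\gamma^{-2}+\mathcal{O}(\gamma^{-4})$.

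\textbf{Main obstacle.} The delicate step is justifying that $R_{\mathrm{array}}$ is characterized by the product half-power equation with the same $r\to\infty$ limiting argument as Theorem~\ref{the:Ry}. It also requires controlling the error terms, namely the higher-order Taylor terms of $F$ and the second-order term in $\delta$, so that they are genuinely $o(\gamma^{-4})$. For the latter I would invoke the implicit function theorem on $G(\eta,\epsilon)=F(\eta)F(\eta\epsilon)-\frac12$ at $(\eta_0,0)$, with $\epsilon=1/\gamma$, using $F'(\eta_0)\neq0$.
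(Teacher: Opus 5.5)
Your proposal is correct and follows essentially the same route as the paper's proof: you characterize $R_{\mathrm{array}}$ through $F(\eta)F(\eta/\gamma)=\tfrac12$ with $z_{\mathrm{eff}}=1/r_{\mathrm{F}}$, use $F(v)\approx 1-\frac{\pi^2}{45}v^4$ for the short-axis factor, linearize $F$ about $\eta_0$ to obtain a shift of order $\gamma^{-4}$, and write $\bar K(\gamma)=(1-K(\gamma))(1-\gamma^{-2})$. Your explicit Fresnel-series check of the $\frac{\pi^2}{45}$ coefficient and your implicit-function-theorem argument for showing the remainders are $o(\gamma^{-4})$ tighten steps that the paper leaves informal.
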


\begin{IEEEproof}
    See Appendix~\ref{proof:the:K}.
\end{IEEEproof}

{\it Theorem~\ref{the:K}} reveals two fundamental physical properties regarding the near-field region of the non-square UPA.
On the one hand, the normalized error $K(\gamma)$ rapidly vanishes at a rate of $\mathcal{O}(\gamma^{-4})$, implying that for practical deployments (e.g., $\gamma = 4$ \cite{3GPP-38.901}), the effective beamfocusing distance is predominantly governed by the long axis, i.e., $R_{\mathrm{array}} \approx R_{\mathrm{x}}$.
On the other hand, the depth ratio $\bar{K}(\gamma)$ asymptotically approaches $1 - \frac{1}{\gamma^2}$, demonstrating that as the array geometry increasingly deviates from a square configuration, the anisotropic near-field region dominates the entire near-field region.

\begin{figure}[htbp]
       \centering
       \subfloat[]{
       \includegraphics[width=0.23\textwidth]{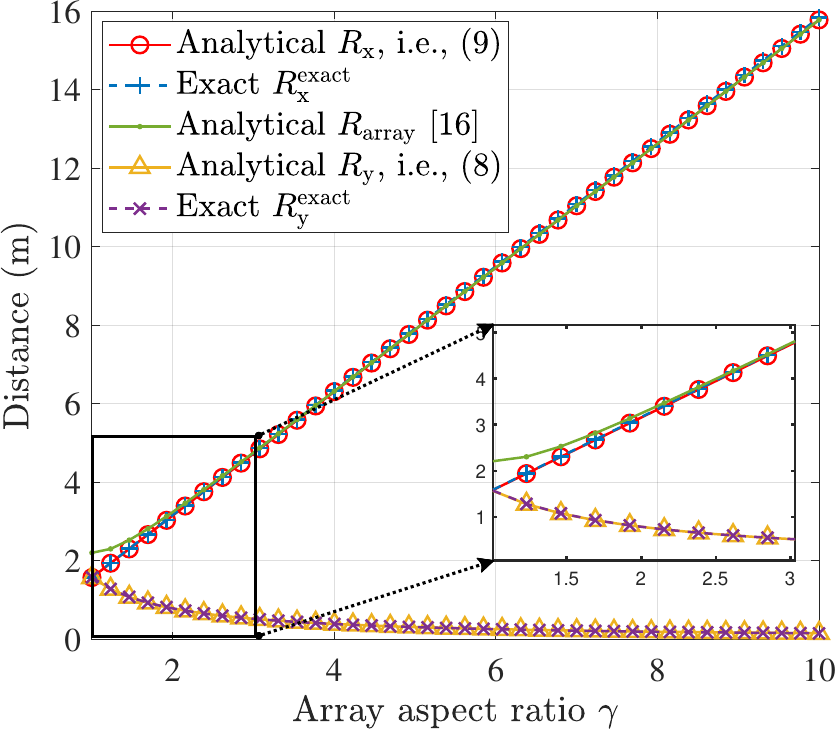}}
       \subfloat[]{
       \includegraphics[width=0.23\textwidth]{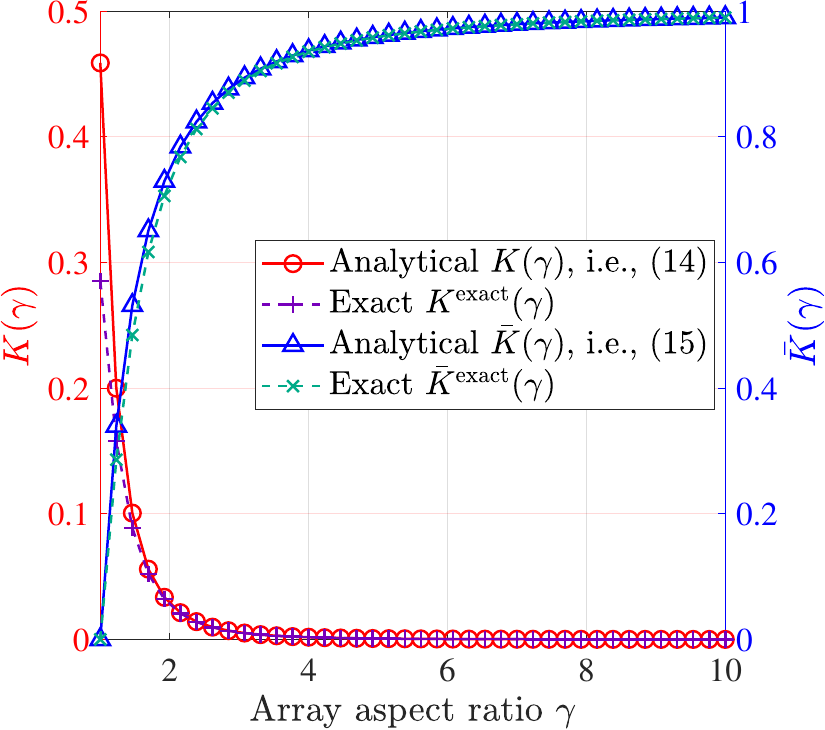}}\\
       \caption{(a) The effective beamfocusing distances along the broadside direction versus the array aspect ratio $\gamma$, (b) the normalized error $K(\gamma)$ and the depth ratio $\bar{K}(\gamma)$ versus the array aspect ratio $\gamma$.}\label{fig:Boundary}
\end{figure}

Fig.\,\ref{fig:Boundary}(a) plots the effective beamfocusing distances along the broadside direction versus the array aspect ratio, where $N = 2048$ and the carrier frequency is $28\,\mathrm{GHz}$.
It is observed that the derived $R_{\mathrm{x}}$ and $R_{\mathrm{y}}$ perfectly align with their exact values (i.e., $R_{\mathrm{x}}^{\mathrm{exact}}$ and $R_{\mathrm{y}}^{\mathrm{exact}}$), verifying the accuracy of the proposed beamfocusing analysis.
When $\gamma$ approaches $1$ (i.e., the array tends toward a square configuration), the long axis and the short axis possess comparable physical apertures, contributing equally to the overall beamfocusing capability.
In this case, driven jointly by both axes, the entire array can sustain beamfocusing over a greater distance than the single long axis, i.e., $R_{\mathrm{array}}$ is larger than $R_{\mathrm{x}}$.
However, as $\gamma$ increases, $R_{\mathrm{y}}$ rapidly diminishes while $R_{\mathrm{x}}$ increases sharply, demonstrating that the beamfocusing capability of the short axis is restricted to a limited distance, whereas the long axis can sustain it over a substantially extended distance.
This disparity leads to an expansive anisotropic near-field region for the non-square UPA.
Within this region (e.g., $\gamma > 2$), the beamfocusing capability of the array is predominantly governed by the long axis, while the impact of the short axis becomes negligible, causing $R_{\mathrm{array}}$ to coincide with $R_{\mathrm{x}}$.

Fig.\,\ref{fig:Boundary}(b) demonstrates $K(\gamma)$ and $\bar{K}(\gamma)$ versus the array aspect ratio, where $N = 2048$ and the carrier frequency is $28\,\mathrm{GHz}$.
As depicted, the analytical results derived in {\it Theorem~\ref{the:K}} closely match the exact results across the entire observation interval and align perfectly for $\gamma > 2$, validating the theoretical derivations.
Moreover, as $\gamma$ increases, $K(\gamma)$ rapidly converges to $0$ while $\bar{K}(\gamma)$ approaches $1$, uncovering that for highly non-square UPAs, the long axis determines the effective beamfocusing distance for the array, and the anisotropic near-field region occupies almost the entire near-field region.

\section{Analysis of EDoF}\label{s:EDoF}
% Spatial degrees of freedom (DoF) characterize the number of independent sub-channels capable of supporting parallel transmissions in a wireless channel.
% In practice, however, the eigenvalues of the spatial correlation matrix, i.e., the power gains of these individual sub-channels, typically decay rapidly beyond a certain index.
% As a result, the effective DoF (EDoF) is adopted to quantify the number of dominant sub-channels that offer practical contributions \cite{ZWang-25TWC}.
Although spatial DoFs characterize the number of independent sub-channels, the rapid decay of spatial correlation eigenvalues necessitates the adoption of EDoF to quantify the dominant sub-channels with practical power gains \cite{ZWang-25TWC}.
For non-square UPA-enabled XL-MIMO systems, the anisotropic beamfocusing capability reshapes these sub-channels, implying that the near-field multiplexing capacity can be manipulated through array geometric reconfiguration.
Based on the anisotropic near-field characterization in Section~\ref{s:anisotropy_NF}, we derive an asymptotic closed-form expression for the EDoF to evaluate the spatial multiplexing capability of the non-square UPA.

The EDoF is defined as the effective rank of the spatial correlation matrix $\mathbf{R} \in \mathbb{C}^{N \times N}$, which is expressed as \cite{ZWang-25TWC}
\begin{align}\label{eq:EDoF0}
    \mathrm{EDoF} \triangleq \frac{\{\mathrm{tr}(\mathbf{R})\}^2}{\mathrm{tr}(\mathbf{R}^2)}.
\end{align}
Considering the user uniformly distributed in the inverse-distance domain $\zeta \in [0, \zeta_{\max}]$ with $\zeta \triangleq \frac{1}{r}$, the probability density function (PDF) of the inverse distance is calculated as $f(\zeta) = \frac{1}{\zeta_{\max}}$.
Accordingly, $\mathbf{R}$ can be formulated as \cite{ZDong-22CL}
\begin{align}\label{eq:R_intP}
    \mathbf{R} =& \int_{0}^{\zeta_{\mathrm{x}}} \mathbf{a}_{\mathrm{FF}}(u_{\mathrm{x}}, u_{\mathrm{y}}) \mathbf{a}_{\mathrm{FF}}^H(u_{\mathrm{x}}, u_{\mathrm{y}}) \frac{1}{\zeta_{\max}} d\zeta \notag\\
    &+ \int_{\zeta_{\mathrm{x}}}^{\zeta_{\mathrm{y}}} \mathbf{a}_{\mathrm{ANF}}\left(\frac{1}{\zeta}, u_{\mathrm{x}}, u_{\mathrm{y}}\right) \mathbf{a}_{\mathrm{ANF}}^H\left(\frac{1}{\zeta}, u_{\mathrm{x}}, u_{\mathrm{y}}\right) \frac{1}{\zeta_{\max}} d\zeta \notag\\
    &+ \int_{\zeta_{\mathrm{y}}}^{\zeta_{\max}} \mathbf{a}_{\mathrm{NF}}\left(\frac{1}{\zeta}, u_{\mathrm{x}}, u_{\mathrm{y}}\right) \mathbf{a}_{\mathrm{NF}}^H\left(\frac{1}{\zeta}, u_{\mathrm{x}}, u_{\mathrm{y}}\right) \frac{1}{\zeta_{\max}} d\zeta,
\end{align}
where $\zeta_{\mathrm{x}} \triangleq \min(\zeta_{\max}, \frac{1}{R_{\mathrm{x}}})$ and $\zeta_{\mathrm{y}} \triangleq \min(\zeta_{\max}, \frac{1}{R_{\mathrm{y}}})$.
To evaluate the near-field spatial multiplexing performance of the non-square UPA, we introduce the following theorem.

\begin{theorem}\label{the:EDoF}
    In the large array aspect ratio regime, the asymptotic closed-form expression for the EDoF of the non-square UPA along the broadside direction is given by
    \begin{align}\label{eq:EDoF}
        \mathrm{EDoF} \approx \frac{\gamma N \xi}{\ln (\gamma N \xi) + \gamma_{\mathrm{E}} + \ln(2\pi) - 1},
    \end{align}
    where $\xi = \frac{d^2}{2\lambda r_{\min}}$, $r_{\min} \triangleq \frac{1}{\zeta_{\max}}$, and $\gamma_{\mathrm{E}}$ denotes the Euler-Mascheroni constant.
\end{theorem}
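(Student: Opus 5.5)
The plan is to reduce the EDoF of the whole array, in the large-$\gamma$ regime, to the EDoF of the long axis alone, and then evaluate that one-dimensional quantity asymptotically. At broadside we have $u_{\mathrm{x}}=u_{\mathrm{y}}=0$. By \emph{Theorem~\ref{the:K}}, as $\gamma\to\infty$ the fully near-field region shrinks relative to the anisotropic one, since $R_{\mathrm{y}}/R_{\mathrm{x}}=\gamma^{-2}$.

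\textbf{Step 1: reduce to a Kronecker form.} I would first argue that the last integral in \eqref{eq:R_intP}, over $[\zeta_{\mathrm{y}},\zeta_{\max}]$, may be replaced by the anisotropic model. The reason is that $\mathbf{a}_{\mathrm{NF}}$ and $\mathbf{a}_{\mathrm{ANF}}$ differ only through the short-axis quadratic phase. That phase is bounded by $\pi N_{\mathrm{y}}^2 d^2\zeta/(4\lambda)$, which is small relative to the long-axis phase when $\gamma$ is large. I would also note that $\mathbf{a}_{\mathrm{FF}}=\mathbf{a}_{\mathrm{x,NF}}(\infty,0)\otimes\mathbf{a}_{\mathrm{y,FF}}(0)$, so the far-field integral over $[0,\zeta_{\mathrm{x}}]$ is just the $\zeta\to0$ value of the same integrand. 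Hence
\begin{align}
\mathbf{R}\approx \mathbf{R}_{\mathrm{x}}\otimes \big(\mathbf{a}_{\mathrm{y,FF}}(0)\mathbf{a}_{\mathrm{y,FF}}^H(0)\big),\qquad
\mathbf{R}_{\mathrm{x}}=\frac{1}{\zeta_{\max}}\int_0^{\zeta_{\max}}\mathbf{a}_{\mathrm{x,NF}}\Big(\tfrac1\zeta,0\Big)\mathbf{a}_{\mathrm{x,NF}}^H\Big(\tfrac1\zeta,0\Big)d\zeta .
\end{align}
The second factor is a rank-one projector with unit trace. Therefore $\mathrm{tr}(\mathbf{R})=\mathrm{tr}(\mathbf{R}_{\mathrm{x}})=1$ and $\mathrm{tr}(\mathbf{R}^2)=\mathrm{tr}(\mathbf{R}_{\mathrm{x}}^2)$, which gives $\mathrm{EDoF}\approx 1/\mathrm{tr}(\mathbf{R}_{\mathrm{x}}^2)$.

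\textbf{Step 2: write $\mathrm{tr}(\mathbf{R}_{\mathrm{x}}^2)$ explicitly.} Performing the $\zeta$-integral entrywise gives
\begin{align}
[\mathbf{R}_{\mathrm{x}}]_{(m,n)}=\frac{1}{N_{\mathrm{x}}}e^{-j\pi\xi(m^2-n^2)}\mathrm{sinc}\big(\pi\xi(m^2-n^2)\big),
\end{align}
with $\xi=d^2\zeta_{\max}/(2\lambda)$ as in the statement. Consequently
\begin{align}
\mathrm{tr}(\mathbf{R}_{\mathrm{x}}^2)=\frac{1}{N_{\mathrm{x}}^2}\sum_{m,n}\mathrm{sinc}^2\big(\pi\xi(m^2-n^2)\big).
\end{align}
Using the continuous-aperture approximation $t=m/N_{\mathrm{x}}$, $s=n/N_{\mathrm{x}}$, and setting $A\triangleq N_{\mathrm{x}}^2\xi=\gamma N\xi$, this becomes
\begin{align}
\int_{-1/2}^{1/2}\!\!\int_{-1/2}^{1/2}\mathrm{sinc}^2\big(\pi A(t^2-s^2)\big)\,dt\,ds .
\end{align}

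\textbf{Step 3: asymptotic evaluation, the main obstacle.} By symmetry I would reduce to the first quadrant and substitute $p=t^2$, $q=s^2$, which gives
\begin{align}
\int_0^{1/4}\!\!\int_0^{1/4}\frac{\mathrm{sinc}^2(\pi A(p-q))}{\sqrt{pq}}\,dp\,dq .
\end{align}
For large $A$ the kernel $\mathrm{sinc}^2(\pi A x)$ concentrates at $x=0$ with total mass $1/A$. A naive leading-order evaluation therefore gives $\frac1A\int dp/p$, which diverges logarithmically at $p\to0$ and is effectively cut off at $p\sim 1/A$. This produces the $\ln A$ term.

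Obtaining the exact additive constant $\gamma_{\mathrm{E}}+\ln(2\pi)-1$ is the delicate part. I would first try to compute the integral exactly in a form that exposes the logarithm. One route is to go back to $t,s$ and use $\mathrm{sinc}^2(x)=\frac{1}{x^2}\sin^2 x$. Another is to write $\mathrm{sinc}^2(\pi A x)=\int_{-1}^{1}(1-|\omega|)e^{j2\pi A\omega x}d\omega$, so that the double integral factorises into $\big|\int e^{j2\pi A\omega t^2}dt\big|^2$. That factor is a Fresnel-integral expression of the type in \eqref{eq:gain_Fresnel}, behaving like $1/(4A|\omega|)$ for large $A|\omega|$. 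Integrating against $(1-|\omega|)$ and splitting at $|\omega|\sim 1/A$ gives $\frac{1}{A}\big(\ln A+c\big)$. The constant $c$ would be extracted from standard integrals of the form $\int_0^\infty(\cdot)\ln v\,dv$, and those are where $\gamma_{\mathrm{E}}$ and $\ln 2\pi$ should enter.

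Finally, inverting $\mathrm{tr}(\mathbf{R}_{\mathrm{x}}^2)\approx(\ln A+\gamma_{\mathrm{E}}+\ln 2\pi-1)/A$ yields \eqref{eq:EDoF}.
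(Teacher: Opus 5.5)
Your reduction is the same as the paper's. The paper sets $g_{\mathrm{y}}\approx 1$, which is exactly your rank-one factor $\mathbf{a}_{\mathrm{y,FF}}(0)\mathbf{a}_{\mathrm{y,FF}}^H(0)$. It then writes $\mathrm{tr}(\mathbf{R}^2)=\frac{2}{\zeta_{\max}^2}\int_0^{\zeta_{\max}}(\zeta_{\max}-\zeta)g_{\mathrm{x}}(\zeta,0)\,d\zeta$ and states the asymptotic without derivation. Your triangle-kernel identity lands on precisely this integral, since $\int_{-1}^{1}(1-|\omega|)\,|\int_{-1/2}^{1/2}e^{j2\pi A\omega t^2}dt|^2d\omega=2\int_0^1(1-\omega)F(\sqrt{A\omega})\,d\omega$, with $F$ as in Theorem~\ref{the:K}.

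One correction to Step~1. Dropping the short axis is not justified by its quadratic phase being small \emph{relative} to the long-axis one; that ratio is $\gamma^{-2}$ for every $\zeta$. What is needed is that the phase be small in absolute terms over the whole user range, i.e., $N_{\mathrm{y}}^2\xi=N\xi/\gamma=\eta_0^2R_{\mathrm{y}}/r_{\min}\ll 1$. This matters because the logarithm is generated by the $1/\zeta$ tail of $g_{\mathrm{x}}$ all the way out to $\zeta_{\max}$. Any decay of $g_{\mathrm{y}}$ inside $[0,\zeta_{\max}]$ would therefore change the additive constant.

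The genuine gap is Step~3, which carries all of the theorem's content. The paper's proof only asserts this step as well.

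First, your tail is off by a factor of two. Since $C^2+S^2\to\frac12$, the Fresnel factor behaves like $F(\sqrt{A|\omega|})\approx\frac{1}{2A|\omega|}$, not $\frac{1}{4A|\omega|}$. With your value the integral gives $\mathrm{tr}(\mathbf{R}^2)\approx\frac{\ln A}{2A}$, i.e., twice the claimed EDoF.

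Second, the constant is only guessed. It can be obtained as follows. With $u=A\omega$,
\begin{align*}
\mathrm{tr}(\mathbf{R}^2)&\approx 2\int_0^1(1-\omega)F(\sqrt{A\omega})\,d\omega\\
&=\frac{2}{A}\int_0^{A}F(\sqrt{u})\,du-\frac{2}{A^2}\int_0^{A}uF(\sqrt{u})\,du .
\end{align*}
The second term tends to $-\frac{1}{A}$ because $uF(\sqrt{u})=C^2(\sqrt{u})+S^2(\sqrt{u})\to\frac12$; this produces the $-1$.

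For the first term, write $F(\sqrt{u})=\frac14|\Phi(\frac{\pi u}{2})|^2$ with $\Phi(k)=\int_0^1e^{jkw}w^{-1/2}dw$. Then $|\Phi(k)|^2$ is the Fourier transform of the autocorrelation
$\chi(x)=\int_{|x|}^{1}\frac{dw}{\sqrt{w(w-|x|)}}=-\ln|x|+2\ln\big(1+\sqrt{1-|x|}\big)$ for $|x|<1$.
Hence, with $K=\frac{\pi U}{2}$,
\begin{align*}
\int_0^{U}F(\sqrt{u})\,du&=\frac{1}{\pi}\int_0^1\chi(x)\frac{\sin Kx}{x}\,dx\\
&=\frac{1}{2}\left(\ln K+\gamma_{\mathrm{E}}\right)+\ln 2+o(1)\\
&=\frac{1}{2}\left(\ln U+\gamma_{\mathrm{E}}+\ln(2\pi)\right)+o(1).
\end{align*}
Two facts are used in the second line:
\begin{itemize}
\item the $-\ln x$ part follows from $\int_0^\infty\frac{\sin y}{y}\ln y\,dy=-\frac{\pi}{2}\gamma_{\mathrm{E}}$ after substituting $y=Kx$;
\item the smooth part tends to $\frac{\pi}{2}\cdot 2\ln 2$ by the Dirichlet limit.
\end{itemize}
Combining the two terms gives $\mathrm{tr}(\mathbf{R}^2)\approx\frac{1}{A}\{\ln A+\gamma_{\mathrm{E}}+\ln(2\pi)-1\}$ with $A=\gamma N\xi$, which is \eqref{eq:R2_tr_ANF}.
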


\begin{IEEEproof}
    See Appendix~\ref{proof:the:EDoF}.
\end{IEEEproof}

The EDoF in \eqref{eq:EDoF} elucidates the relationship between the distance-domain spatial multiplexing capability of the non-square UPA and its geometry.
Note that $\gamma N=N^2_{\mathrm{x}}$ and therefore \eqref{eq:EDoF} demonstrates that the multiple-access capacity is primarily governed by the effective physical aperture along the long axis rather than solely the number of antennas.
For a fixed $N$, a larger $\gamma$ elongates the long-axis aperture to extend the effective beamfocusing distance and expand the anisotropic near-field region, thereby monotonically increasing the EDoF.
As a consequence, for practical system design, deploying a horizontally elongated non-square UPA can exploit the distance-domain multiplexing potential and enlarge the spatial multiplexing capacity through geometric reconfiguration, without incurring additional hardware overhead.
Nevertheless, such an increase in $\gamma$ also introduces a fundamental trade-off between the distance-domain multiplexing capacity and the elevation angular resolution.
When $\gamma = N$, the array degenerates into a ULA and loses its spatial resolution in the elevation domain, and meanwhile the EDoF achieves its maximum value, i.e., $\mathrm{EDoF}_{\max} = \frac{N^2 \xi}{\ln (N^2 \xi) + \gamma_{\mathrm{E}} + \ln(2\pi) - 1}$.

\begin{figure}[htbp]
       \centering
       \includegraphics[width=0.23\textwidth]{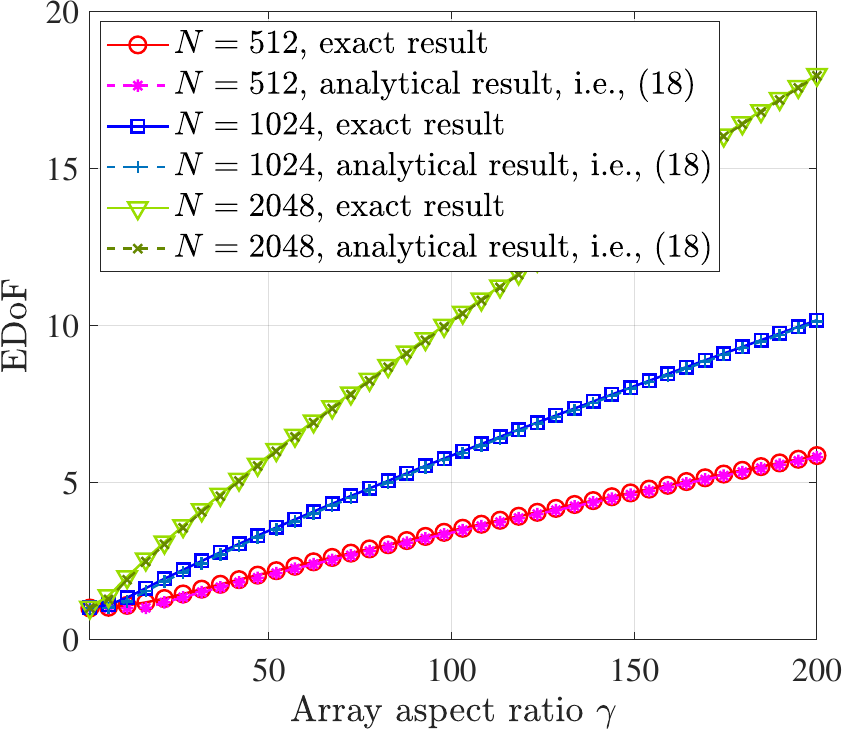}
       \caption{The EDoF along the broadside direction versus the array aspect ratio $\gamma$ under different numbers of antennas, i.e., $N = 512, 1024$, and $2048$.}\label{fig:EDoF_gamma}
\end{figure}

Fig.\,\ref{fig:EDoF_gamma} illustrates the EDoF along the broadside direction versus the array aspect ratio under different numbers of antennas.
It is observed that the analytical results match the exact results, even when the array aspect ratio is not very large, validating the theoretical analysis in {\it Theorem~\ref{the:EDoF}}.
Furthermore, the EDoF increases with both the array aspect ratio and the number of antennas.
This scaling law is attributed to the fact that increasing either $\gamma$ or $N$ expands the aperture of the array, thereby providing a finer spatial resolution and thus a larger EDoF in the distance domain.

\section{Performance Bounds for Parameter Estimation and 3D Positioning}\label{s:Estimation_analysis}
Beyond affecting spatial multiplexing, the array aspect ratio also governs the resolutions in the distance and angular domains, introducing a performance trade-off among the estimation accuracies of the distance, the azimuth AoA, and the elevation AoA.
To quantify these geometric impacts, we derive the closed-form CRB for distance estimation and the 3D PEB.
Based on these analytical bounds, we further determine the optimal array aspect ratio that minimizes the PEB.

Given a pilot signal $s$ transmitted by the single-antenna user,\footnote{Without loss of generality, we assume $s = 1$.} the received signal at the BS, i.e., $\mathbf{y}_1 \in \mathbb{C}^{N \times 1}$, can be expressed as
\begin{align}
    \mathbf{y}_1 = \alpha \sqrt{N} \mathbf{a}_{\mathrm{NF}}(r, u_{\mathrm{x}}, u_{\mathrm{y}}) s + \mathbf{n}_1,
\end{align}
where $\alpha = |\alpha| e^{j \varphi_{\alpha}}$ denotes the complex channel gain, and $\mathbf{n}_1 \sim \mathcal{CN}(\mathbf{0}, \sigma_1^2 \mathbf{I})$ represents the additive white Gaussian noise (AWGN) received at the BS.
To evaluate the estimation performance of distance for the non-square UPA, we establish the following theorem.

\begin{theorem}\label{the:CRB_r}
    The closed-form CRB for distance estimation under the non-square UPA is expressed with respect to the array aspect ratio as
    \begin{align}\label{eq:CRB_r}
        \mathrm{CRB}_{r} = \frac{90 \lambda^2 r^4}{\pi^2 d^4 \rho_1 N^3 \left\{ \frac{\gamma^2 (1-u_{\mathrm{x}}^2)^2}{1 + \frac{\gamma N d^2 u_{\mathrm{x}}^2}{15 r^2}} + \frac{\gamma^{-2} (1-u_{\mathrm{y}}^2)^2}{1 + \frac{\gamma^{-1} N d^2 u_{\mathrm{y}}^2}{15 r^2}} \right\}},
    \end{align}
    where $\rho_1 = \frac{|\alpha|^2}{\sigma_1^2}$ denotes the received signal-to-noise ratio (SNR).
    Along the broadside direction, \eqref{eq:CRB_r} simplifies to
    \begin{align}\label{eq:CRB_r0}
        \mathrm{CRB}_{r} = \frac{90 \lambda^2 r^4}{\pi^2 d^4 \rho_1 N^3 (\gamma^2 + \gamma^{-2}) }.
    \end{align}
\end{theorem}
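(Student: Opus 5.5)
We will obtain the CRB from the Fisher information matrix (FIM) of the Gaussian observation model, using the separable Fresnel phase in \eqref{eq:r_n}. The unknown parameter vector is $\boldsymbol{\vartheta} = [r, u_{\mathrm{x}}, u_{\mathrm{y}}, |\alpha|, \varphi_\alpha]^T$. Write the noiseless signal as $\boldsymbol{\mu} = \alpha\sqrt{N}\mathbf{a}_{\mathrm{NF}}$, whose $(n_{\mathrm{x}},n_{\mathrm{y}})$-th entry is $\alpha e^{-j\frac{2\pi}{\lambda}\psi(n_{\mathrm{x}},n_{\mathrm{y}})}$ with
\begin{align*}
\psi = -n_{\mathrm{x}} d u_{\mathrm{x}} - n_{\mathrm{y}} d u_{\mathrm{y}} + \frac{n_{\mathrm{x}}^2 d^2 (1-u_{\mathrm{x}}^2)}{2r} + \frac{n_{\mathrm{y}}^2 d^2 (1-u_{\mathrm{y}}^2)}{2r}.
\end{align*}
The FIM entries are $\frac{2}{\sigma_1^2}\Re\{\partial_i\boldsymbol{\mu}^H\partial_j\boldsymbol{\mu}\}$.

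For the geometric parameters, $\Re\{\cdot\}$ reduces to $|\alpha|^2(2\pi/\lambda)^2\sum\partial_i\psi\,\partial_j\psi$. The phase $\varphi_\alpha$ couples through $\sum\partial_i\psi$, while $|\alpha|$ decouples. The derivatives are
\begin{align*}
\partial_r\psi &= -\frac{d^2}{2r^2}\left[n_{\mathrm{x}}^2(1-u_{\mathrm{x}}^2)+n_{\mathrm{y}}^2(1-u_{\mathrm{y}}^2)\right],\\
\partial_{u_{\mathrm{x}}}\psi &= -n_{\mathrm{x}} d - \frac{n_{\mathrm{x}}^2 d^2 u_{\mathrm{x}}}{r},
\end{align*}
and analogously for $u_{\mathrm{y}}$. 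All sums are evaluated with the large-array moments $\sum n^2 \approx N_\cdot^3/12$ and $\sum n^4\approx N_\cdot^5/80$, while odd moments vanish by symmetry of $\mathcal{N}_{\mathrm{x}},\mathcal{N}_{\mathrm{y}}$.

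First eliminate the phase nuisance $\varphi_\alpha$ via the Schur complement. This replaces each $n^2$ by its centered version $n^2-\overline{n^2}$, turning the fourth-moment $N^5/80$ into the variance $N^5/80-N^5/144=N^5/180$. This centering is the source of the constant 90 (since $180/2$).

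After centering, the entries decouple along the axes: the $u_{\mathrm{x}}$ row involves only $x$-indexed terms, and cross $x$–$y$ products of centered squares sum to zero. The reduced FIM in $(r,u_{\mathrm{x}},u_{\mathrm{y}})$ therefore has an arrowhead form, with $u_{\mathrm{x}}$ and $u_{\mathrm{y}}$ coupling only to $r$. The entries are:
\begin{itemize}
\item $J_{rr}\propto \frac{d^4}{4r^4}\frac{1}{180}\left[N_{\mathrm{y}}N_{\mathrm{x}}^5(1-u_{\mathrm{x}}^2)^2+N_{\mathrm{x}}N_{\mathrm{y}}^5(1-u_{\mathrm{y}}^2)^2\right]$;
\item $J_{u_{\mathrm{x}}u_{\mathrm{x}}}\propto N_{\mathrm{y}}\left(\frac{d^2N_{\mathrm{x}}^3}{12}+\frac{d^4u_{\mathrm{x}}^2N_{\mathrm{x}}^5}{180 r^2}\right)$;
\item $J_{ru_{\mathrm{x}}}\propto \frac{d^4 u_{\mathrm{x}}(1-u_{\mathrm{x}}^2)}{2r^3}N_{\mathrm{y}}\frac{N_{\mathrm{x}}^5}{180}$, since only the centered quartic term survives.
\end{itemize}

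Inverting the arrowhead matrix gives
\begin{align*}
\mathrm{CRB}_r=\left(J_{rr}-\frac{J_{ru_{\mathrm{x}}}^2}{J_{u_{\mathrm{x}}u_{\mathrm{x}}}}-\frac{J_{ru_{\mathrm{y}}}^2}{J_{u_{\mathrm{y}}u_{\mathrm{y}}}}\right)^{-1}.
\end{align*}
Each $x$-part combines into the single fraction
\begin{align*}
\frac{d^4N_{\mathrm{y}}N_{\mathrm{x}}^5(1-u_{\mathrm{x}}^2)^2}{720r^4}\,\frac{1}{1+\frac{N_{\mathrm{x}}^2d^2u_{\mathrm{x}}^2}{15r^2}},
\end{align*}
after verifying that $1-\frac{a}{b+a}=\frac{b}{b+a}$ with the ratio $\frac{N_{\mathrm{x}}^2 d^2u_{\mathrm{x}}^2/180r^2}{1/12}=\frac{N_{\mathrm{x}}^2d^2u_{\mathrm{x}}^2}{15r^2}$. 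The $y$-part is analogous.

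Finally, substitute $N_{\mathrm{x}}^2=\gamma N$ and $N_{\mathrm{y}}^2=N/\gamma$, so that $N_{\mathrm{y}}N_{\mathrm{x}}^5=\gamma^2N^3$ and $N_{\mathrm{x}}N_{\mathrm{y}}^5=\gamma^{-2}N^3$. Then multiply by the prefactor $\frac{2|\alpha|^2}{\sigma_1^2}\frac{4\pi^2}{\lambda^2}$ and collect constants, which gives $\frac{720}{8}=90$. This yields \eqref{eq:CRB_r}. Setting $u_{\mathrm{x}}=u_{\mathrm{y}}=0$ removes the denominators and gives \eqref{eq:CRB_r0}.

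The main obstacle is bookkeeping the nuisance-parameter coupling correctly: the $\varphi_\alpha$ elimination (centering) and the $r$–$u$ cross terms. The claimed $x$/$y$ decoupling must be checked carefully, including that the cross term $J_{u_{\mathrm{x}}u_{\mathrm{y}}}$ vanishes after centering (it does, since it reduces to products of odd or centered moments). Higher-order terms in $1/N$ are dropped consistently with the continuous-aperture approximation used in \eqref{eq:gain_Fresnel}.
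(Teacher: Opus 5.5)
Your proposal is correct and follows essentially the same route as the paper: the Slepian--Bangs FIM, Schur-complement elimination of $\varphi_\alpha$ (which turns the entries into covariances of the phase derivatives), the vanishing $u_{\mathrm{x}}$--$u_{\mathrm{y}}$ coupling, and a final Schur complement on the resulting arrowhead structure to obtain $J_{\mathrm{r}}$ and hence \eqref{eq:CRB_r}. The differences are cosmetic and do not change the result: you evaluate moments with discrete large-array sums ($\sum n^2\approx N^3/12$, $\sum n^4\approx N^5/80$) where the paper uses continuous uniform-aperture moments ($L^2/12$, $L^4/180$), and you carry $|\alpha|$ as an extra nuisance parameter that decouples.
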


\begin{IEEEproof}
    See Appendix~\ref{proof:the:CRB_r}.
\end{IEEEproof}

\begin{figure}[htbp]
       \centering
       \includegraphics[width=0.23\textwidth]{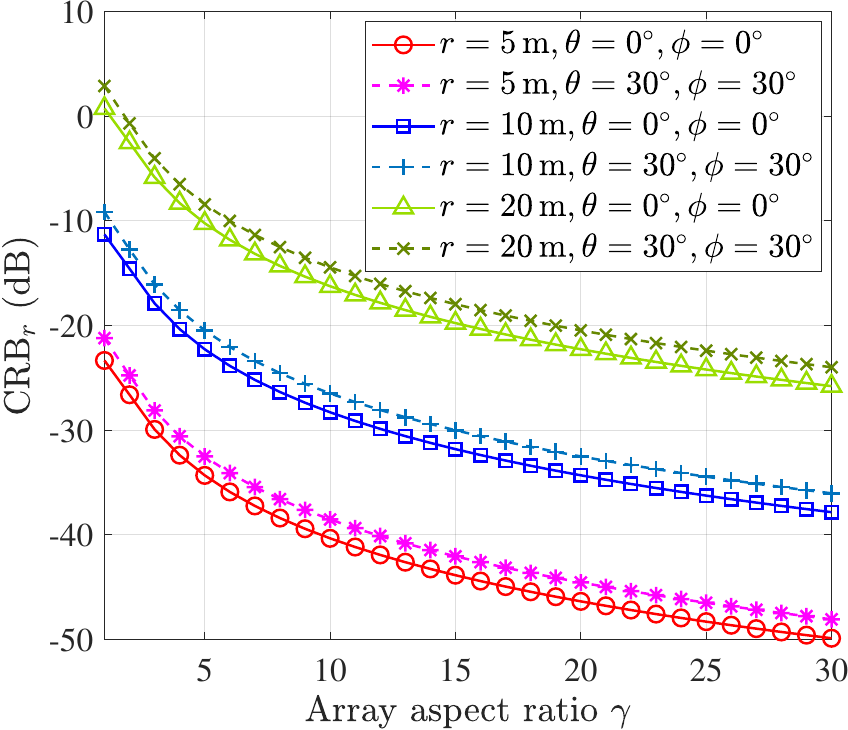}
       \caption{The CRB for distance estimation versus the array aspect ratio $\gamma$ under various focal distances and observation angles.}\label{fig:CRB_r_gamma}
\end{figure}

As indicated by \eqref{eq:CRB_r0}, $\mathrm{CRB}_{r}$ is proportional to $r^4$, which implies that the distance estimation accuracy severely degrades as the user approaches the far-field region.
For a fixed number of antennas, increasing $\gamma$ leads to a rapid decline in $\mathrm{CRB}_{r}$, thereby achieving superior distance estimation performance.
Notice that $\mathrm{CRB}_{r}$ reaches its global maximum if and only if $\gamma = 1$ (i.e., a square UPA configuration).
Fig.\,\ref{fig:CRB_r_gamma} plots the CRB for distance estimation versus the array aspect ratio under various focal distances and observation angles, where $N = 2048$ and the carrier frequency is $28\,\mathrm{GHz}$.
Evidently, whether along the broadside direction or at off-boresight angles, increasing $\gamma$ significantly reduces $\mathrm{CRB}_{r}$, validating the superiority of the non-square UPA in distance estimation.

{\it Theorem~\ref{the:CRB_r}} also validates the effectiveness of the anisotropic near-field boundaries $R_{\mathrm{y}}$ and $R_{\mathrm{x}}$ established in Section~\ref{s:anisotropy_NF}.
Specifically, \eqref{eq:J_r} reveals that the Fisher information for $r$, i.e., $J_{\mathrm{r}}$, can be decoupled into two independent components dominated by the long axis and the short axis, respectively.
When the user enters the anisotropic near-field region, the Fisher information of the short axis decays to a negligible magnitude compared to that of the long axis, resulting in $J_{\mathrm{r}}$ being governed by the long axis.
% While increasing $\gamma$ enhances the distance and azimuth AoA estimation performance of the non-square UPA, it degrades the elevation AoA estimation accuracy, introducing a fundamental performance trade-off in 3D positioning.
While increasing $\gamma$ enhances the distance and azimuth AoA estimation performance of the non-square UPA, such an increase simultaneously degrades the elevation AoA estimation accuracy, introducing a fundamental performance trade-off in 3D positioning.
To characterize the impact of $\gamma$ on the positioning performance, we introduce the PEB and establish the following theorem.

\begin{theorem}\label{the:PEB}
    The closed-form PEB for the non-square UPA, which establishes the lower bound on the root mean square positioning error, is calculated as
    \begin{align}\label{eq:PEB}
        \mathrm{PEB} =& \bigg\{\mathrm{CRB}_{r} + \frac{r^2}{u_{\mathrm{z}}^2} \left\{ (1-u_{\mathrm{y}}^2)\mathrm{CRB}_{u_{\mathrm{x}}}\right. \notag\\
        & \left. + (1-u_{\mathrm{x}}^2)\mathrm{CRB}_{u_{\mathrm{y}}} + 2u_{\mathrm{x}}u_{\mathrm{y}}\mathrm{Cov}(u_{\mathrm{x}}, u_{\mathrm{y}}) \right\}\bigg\}^{\frac{1}{2}},
    \end{align}
    where $u_{\mathrm{z}} \triangleq \sqrt{1 - u_{\mathrm{x}}^2 - u_{\mathrm{y}}^2}$, $\mathrm{CRB}_{u_{\mathrm{x}}}$ and $\mathrm{CRB}_{u_{\mathrm{y}}}$ denote the CRBs for $u_{\mathrm{x}}$ and $u_{\mathrm{y}}$, respectively, shown as \eqref{eq:CRB_xy}.
    Along the broadside direction, \eqref{eq:PEB} simplifies to
    \begin{align}\label{eq:PEB0}
        \mathrm{PEB} = \sqrt{\frac{90 \lambda^2 r^4}{\pi^2 d^4 \rho_1 N^3}\frac{1}{\gamma^2 + \gamma^{-2}} + \frac{3 \lambda^2 r^2}{2\pi^2 \rho_1 N^2 d^2} (\gamma+\gamma^{-1})}.
    \end{align}
    The optimal array aspect ratio $\gamma_{\mathrm{opt}}$ that minimizes the $\mathrm{PEB}$ in \eqref{eq:PEB0} is the unique real root of the following equation:
    \begin{align}\label{eq:gamma_opt}
        \frac{2\gamma(1+\gamma^{-2})}{(\gamma^2 + \gamma^{-2})^2} = \frac{\kappa_2}{\kappa_1},
    \end{align}
    where $\kappa_1 \triangleq \frac{90\lambda^2 r^4}{\pi^2 d^4 \rho_1 N^3}$ and $\kappa_2 \triangleq \frac{3\lambda^2 r^2}{2\pi^2\rho_1 N^2 d^2}$.
    Furthermore, in the large array aspect ratio regime, the asymptotic closed-form expression for $\gamma_{\mathrm{opt}}$ is given by
    \begin{align}\label{eq:gamma_opt_asy}
        \gamma_{\mathrm{opt}} \approx \left( \frac{120 r^2}{N d^2} \right)^{\frac{1}{3}}.
    \end{align}
\end{theorem}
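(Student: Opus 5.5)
Theorem~\ref{the:PEB} follows from the Fisher information framework already set up for Theorem~\ref{the:CRB_r}, combined with a change of variables from $(r,u_{\mathrm{x}},u_{\mathrm{y}})$ to Cartesian coordinates.

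\textbf{Step 1: the $5\times5$ Fisher information matrix (FIM).} I would write the FIM for $\boldsymbol{\psi}=(r,u_{\mathrm{x}},u_{\mathrm{y}},|\alpha|,\varphi_\alpha)$ using the Slepian--Bangs formula $[\mathbf{J}]_{ij}=2\rho_1\Re\{\partial_i\boldsymbol{\mu}^H\partial_j\boldsymbol{\mu}\}$ with $\boldsymbol{\mu}=\sqrt{N}\mathbf{a}_{\mathrm{NF}}$. Because the phase in \eqref{eq:r_n} is separable in $n_{\mathrm{x}}$ and $n_{\mathrm{y}}$, every entry reduces to the sums $\sum n^k$ for $k=0,2,4$ over each axis, and the odd moments vanish. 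Under the continuous approximation $\sum n_{\mathrm{x}}^2\approx N_{\mathrm{x}}^3/12$ and $\sum n_{\mathrm{x}}^4\approx N_{\mathrm{x}}^5/80$, with $N_{\mathrm{x}}=\sqrt{\gamma N}$ and $N_{\mathrm{y}}=\sqrt{N/\gamma}$.

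\textbf{Step 2: eliminate the nuisance parameters.} I would remove the gain and phase by a Schur complement. I would then invert the remaining $3\times3$ block to obtain $\mathrm{CRB}_{u_{\mathrm{x}}}$, $\mathrm{CRB}_{u_{\mathrm{y}}}$ and $\mathrm{Cov}(u_{\mathrm{x}},u_{\mathrm{y}})$, to be labelled \eqref{eq:CRB_xy}. The $r$ entry is taken from Theorem~\ref{the:CRB_r}. The $r$--$u$ cross terms enter only through the $u_{\mathrm{x}}n_{\mathrm{x}}^3$-type coupling. For the PEB statement itself I would keep $\mathrm{Cov}(r,u)$ negligible, which is consistent with the form \eqref{eq:PEB} containing no $r$--$u$ cross term.

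\textbf{Step 3: map to position.} The position is $\mathbf{p}=r(u_{\mathrm{x}},u_{\mathrm{y}},u_{\mathrm{z}})$ with $u_{\mathrm{z}}=\sqrt{1-u_{\mathrm{x}}^2-u_{\mathrm{y}}^2}$, and $\mathrm{PEB}^2=\mathrm{tr}(\mathbf{T}\,\mathrm{CRB}\,\mathbf{T}^T)$ with $\mathbf{T}=\partial\mathbf{p}/\partial(r,u_{\mathrm{x}},u_{\mathrm{y}})$. The $r$-column is a unit vector, so it contributes $\mathrm{CRB}_r$. The $u_{\mathrm{x}}$-column is $r(1,0,-u_{\mathrm{x}}/u_{\mathrm{z}})$, with squared norm $r^2(1-u_{\mathrm{y}}^2)/u_{\mathrm{z}}^2$. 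The $u_{\mathrm{y}}$-column has squared norm $r^2(1-u_{\mathrm{x}}^2)/u_{\mathrm{z}}^2$. Their inner product is $r^2u_{\mathrm{x}}u_{\mathrm{y}}/u_{\mathrm{z}}^2$. The $r$-column is orthogonal to both angular columns because $\mathbf{u}\cdot\partial\mathbf{u}=0$. Together these give \eqref{eq:PEB}.

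\textbf{Step 4: broadside specialization.} At broadside $u_{\mathrm{x}}=u_{\mathrm{y}}=0$ and $u_{\mathrm{z}}=1$, the FIM is block-diagonal. The angular information is $J_{u_{\mathrm{x}}}=2\rho_1 N(2\pi d/\lambda)^2 N_{\mathrm{x}}^2/12$, so $\mathrm{CRB}_{u_{\mathrm{x}}}=\frac{3\lambda^2}{2\pi^2\rho_1 N^2d^2}\gamma^{-1}$, and symmetrically $\mathrm{CRB}_{u_{\mathrm{y}}}$ carries $\gamma$. Combining these with \eqref{eq:CRB_r0} yields \eqref{eq:PEB0}.

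\textbf{Step 5: optimal aspect ratio.} I would minimize $h(\gamma)=\kappa_1/(\gamma^2+\gamma^{-2})+\kappa_2(\gamma+\gamma^{-1})$ over $\gamma\ge1$. Setting $h'(\gamma)=0$ gives
\begin{align}
\kappa_1\frac{2\gamma-2\gamma^{-3}}{(\gamma^2+\gamma^{-2})^2}=\kappa_2(1-\gamma^{-2}).
\end{align}
Since $2\gamma-2\gamma^{-3}=2\gamma(1-\gamma^{-2})(1+\gamma^{-2})$, the common factor $1-\gamma^{-2}$ cancels for $\gamma>1$, which gives \eqref{eq:gamma_opt}.

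\textbf{Main obstacle: uniqueness of the root.} The left-hand side $L(\gamma)$ of \eqref{eq:gamma_opt} behaves like $2/\gamma^3$ for large $\gamma$. I would show it is strictly decreasing on $\gamma>1$ by checking the sign of $\frac{d}{d\gamma}\ln L=\frac{1}{\gamma}-\frac{2\gamma^{-3}}{1+\gamma^{-2}}-\frac{2(2\gamma-2\gamma^{-3})}{\gamma^2+\gamma^{-2}}$. Substituting $x=\gamma^2$ should make this reduce to a polynomial sign check. I also need $\kappa_2/\kappa_1<L(1)=1$, i.e., the regime where the interior optimum exists. Otherwise the optimum sits at the boundary $\gamma=1$, which I would note as a remark.

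\textbf{Asymptotic form.} In the large-$\gamma$ regime $L(\gamma)\approx2\gamma^{-3}$, so $\gamma^3\approx2\kappa_1/\kappa_2=120r^2/(Nd^2)$, which is \eqref{eq:gamma_opt_asy}.
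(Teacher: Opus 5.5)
Your proposal is correct and follows essentially the same route as the paper: the Jacobian mapping $\mathrm{PEB}^2=\mathrm{tr}(\mathbf{H}\,\mathrm{CRB}\,\mathbf{H}^T)$, the diagonal FIM at broadside, cancelling the factor $(1-\gamma^{-2})$ in the derivative, a monotonicity check on the left-hand side of \eqref{eq:gamma_opt} with $L(1)=1$ (the paper's $r>r_{\mathrm{th}}$ condition), and $L(\gamma)\approx 2\gamma^{-3}$ for the asymptotic root. One small correction: Step~2 does not need to treat $\mathrm{Cov}(r,u)$ as negligible, because the $r$--$u$ cross terms drop out exactly by the orthogonality of the Jacobian columns you establish in Step~3; also, the $r$--$u_{\mathrm{x}}$ coupling comes from the fourth-moment term $u_{\mathrm{x}}\mathrm{Var}(n_{\mathrm{x}}^2)$, not from an $n_{\mathrm{x}}^3$ term, since odd moments vanish.
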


\begin{IEEEproof}
    See Appendix~\ref{proof:the:PEB}.
\end{IEEEproof}

To analyze the impact of the non-square UPA geometry on the 3D positioning performance, the $\mathrm{PEB}^2$ along the broadside direction can be decoupled into a distance-domain error component proportional to $\frac{1}{\gamma^2 + \gamma^{-2}}$ and an angular-domain error component proportional to $\gamma + \gamma^{-1}$, as given in \eqref{eq:PEB0}.
It is observed that the angular-domain component achieves its minimum if and only if $\gamma = 1$ (i.e., a square UPA configuration), whereas the distance-domain component is maximized under this condition.
This demonstrates a performance trade-off in non-square UPA-enabled XL-MIMO systems, where increasing the array aspect ratio enhances the distance-domain resolution at the expense of angular-domain resolution.
Moreover, \eqref{eq:gamma_opt_asy} uncovers that the optimal array aspect ratio $\gamma_{\mathrm{opt}}$ minimizing the PEB is proportional to the cube root of the squared distance and inversely proportional to the cube root of the number of antennas, i.e., $\gamma_{\mathrm{opt}} \propto r^{\frac{2}{3}}$ and $\gamma_{\mathrm{opt}} \propto N^{-\frac{1}{3}}$.
This analytical insight provides guidelines for practical system designs.
For instance, by exploiting the property $\gamma_{\mathrm{opt}} \propto r^{\frac{2}{3}}$, the BS can dynamically activate specific sub-arrays based on the prior user position to optimize system energy efficiency.

\begin{figure}[htbp]
       \centering
       \subfloat[]{
       \includegraphics[width=0.23\textwidth]{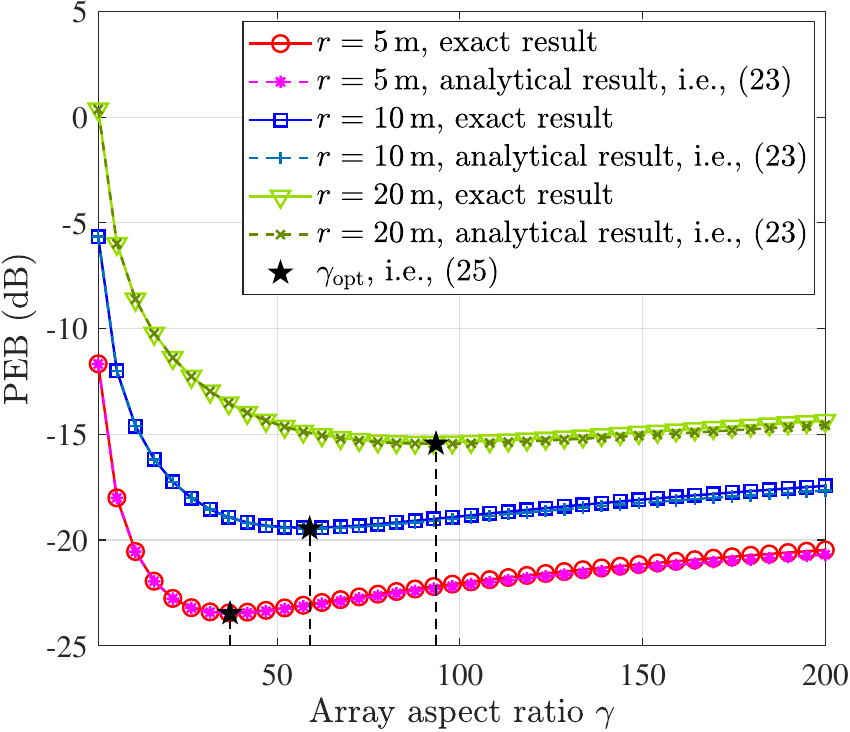}}
       \subfloat[]{
       \includegraphics[width=0.23\textwidth]{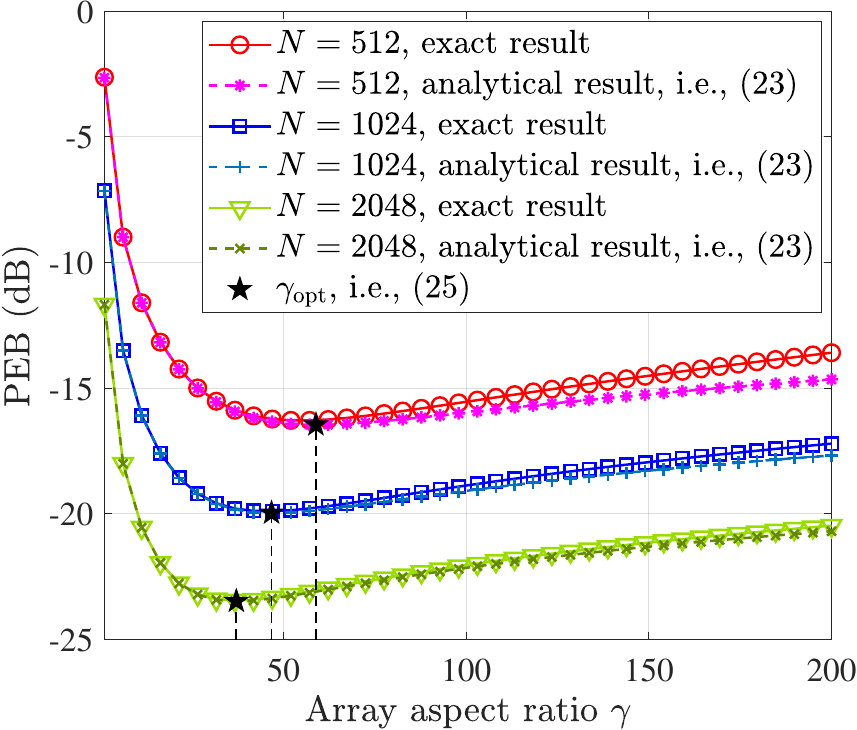}}\\
       \caption{The PEB along the broadside direction versus the array aspect ratio $\gamma$ (a) under various distances $r \in \{5, 10, 20\}\,\mathrm{m}$, and (b) under various numbers of antennas $N \in \{512, 1024, 2048\}$.}\label{fig:PEB_gamma}
\end{figure}

The PEB along the broadside direction versus the array aspect ratio is evaluated under various distances $r \in \{5, 10, 20\}\,\mathrm{m}$ in Fig.\,\ref{fig:PEB_gamma}(a) and various numbers of antennas $N \in \{512, 1024, 2048\}$ in Fig.\,\ref{fig:PEB_gamma}(b), respectively, with the carrier frequency set to $28\,\mathrm{GHz}$, $N = 2048$ for Fig.\,\ref{fig:PEB_gamma}(a), and $r = 5\,\mathrm{m}$ for Fig.\,\ref{fig:PEB_gamma}(b).
As $\gamma$ increases, the PEB first decreases and then increases, revealing a performance trade-off among the distance, the azimuth AoA, and the elevation AoA estimation for 3D positioning.
The asymptotic optimal array aspect ratio $\gamma_{\mathrm{opt}}$ derived in \eqref{eq:gamma_opt_asy} perfectly coincides with the global minimum of the PEB curves in all considered scenarios, even when the array aspect ratio is not very large.
The analytical PEB also closely matches the exact results over the majority range of $\gamma$, validating the effectiveness of the theoretical analysis.
The deviation between the analytical PEB and the exact results observed at $N = 512$ with a large $\gamma$ stems from the approximation of $(N_{\mathrm{y}}-1)d \approx N_{\mathrm{y}}d$ applied during the derivation.
In addition, as the distance increases or the number of antennas decreases, both the PEB and the optimal array aspect ratio $\gamma_{\mathrm{opt}}$ increase, indicating that $\gamma_{\mathrm{opt}}$ is jointly governed by the distance and the number of antennas.

\section{Codebook Design and Channel Estimation}\label{s:ChannelEst}
In channel estimation for XL-MIMO systems, compressive sensing techniques are widely applied to reduce pilot overhead \cite{SYue-24TWC}, \cite{HLei-24TSP}, \cite{YKang-25TCOM}.
Conventional codebook designs primarily rely on isotropic wavefront assumptions, employing the 2D-DFT codebook for far-field planar wavefronts \cite{YWang-21TWC} and the 3D polar-domain codebook for near-field spherical wavefronts \cite{CD-22TCOMM}.
However, these codebooks encounter limitations within the anisotropic near-field region of non-square UPAs.
Specifically, the 2D-DFT codebook fails to capture the quadratic phase along the long axis, causing energy leakage and beam spreading, while the 3D polar-domain codebook introduces over-parameterization and prohibitive computational overhead.
To address these challenges, we leverage the revealed anisotropic wavefront characteristics to design a novel 3D-ANF codebook, facilitating low-complexity channel estimation.
% To address these challenges, we design a novel 3D-ANF codebook by leveraging the revealed anisotropic wavefront characteristics, where the long and short axes simultaneously exhibit near-field spherical and far-field planar properties, respectively.

By exploiting the two-dimensional geometry of the non-square UPA, the codebook $\mathbf{D}_{\mathrm{ANF}}$ can be formulated as the Kronecker product of the long-axis dictionary $\mathbf{D}_{\mathrm{x}}$ and the short-axis dictionary $\mathbf{D}_{\mathrm{y}}$, i.e., $\mathbf{D}_{\mathrm{ANF}} = \mathbf{D}_{\mathrm{x}} \otimes \mathbf{D}_{\mathrm{y}}$.
To make $\mathbf{D}_{\mathrm{x}}$ well capture the spatial sparsity, we first analyze the equivalent noiseless received signal along the long axis $s(n_{\mathrm{x}})$, which can be modeled as a discrete chirp signal, as follows
\begin{align}
    s(n_{\mathrm{x}}) = \frac{1}{\sqrt{N_\mathrm{x}}} e^{ j2\pi \left( \tau_1 n_{\mathrm{x}} - \frac{\tau_2}{2} n_{\mathrm{x}}^2 \right) },
\end{align}
where $n_{\mathrm{x}} \in \mathcal{N}_{\mathrm{x}}$, $\tau_1 \triangleq \frac{u_\mathrm{x} d}{\lambda}$ and $\tau_2 \triangleq \frac{(1-u_\mathrm{x}^2)d^2}{\lambda r}$ denote the linear spatial frequency and the spatial chirp rate, respectively.
Due to the presence of the quadratic phase term, applying the DFT to $s(n_{\mathrm{x}})$ inevitably leads to spectrum spreading and energy leakage, thereby destroying the sparse representation of the signal in the angular domain.
In contrast, the DFrFT can rotate the spatial-frequency plane to provide a matched orthogonal basis for the chirp signal, thus restoring the channel sparsity in the fractional transform domain \cite{XYang-24WCL}.
The $p$-th order ($p \in [0.5, 1.5]$) DFrFT of $s(n_{\mathrm{x}})$ is derived as \eqref{eq:DFrFT_s}, shown at the top of the next page,
\begin{figure*}
\begin{align}\label{eq:DFrFT_s}
    S_\alpha(q) =& c_\alpha \sum_{n_{\mathrm{x}} = -\frac{N_{\mathrm{x}}-1}{2}}^{\frac{N_{\mathrm{x}}-1}{2}} e^{ j\frac{\pi \csc\alpha}{N_{\mathrm{x}}} \left( q^2 \cos\alpha - 2q n_{\mathrm{x}} + n_{\mathrm{x}}^2 \cos\alpha \right)} s(n_{\mathrm{x}}) \notag\\
    =& \frac{c_\alpha }{\sqrt{N_{\mathrm{x}}}} e^{j\pi \frac{q^2 \cot\alpha}{N_{\mathrm{x}}}} \sum_{n_{\mathrm{x}} = -\frac{N_{\mathrm{x}}-1}{2}}^{\frac{N_{\mathrm{x}}-1}{2}} e^{ j\pi n_{\mathrm{x}}^2 \left( \frac{\cot\alpha}{N_{\mathrm{x}}} - \tau_2 \right) - j2\pi n_{\mathrm{x}} \left( \frac{q \csc\alpha}{N_{\mathrm{x}}} - \tau_1 \right)}
\end{align}
{\noindent}\rule[-5pt]{17.5cm}{0.05em}
\end{figure*}
where $c_\alpha \triangleq \sqrt{\frac{1-j\cot\alpha}{N_{\mathrm{x}}}}$, $\alpha \triangleq \frac{p\pi}{2}$, and $q \in \mathcal{N}_{\mathrm{x}}$ represents the index of the DFrFT.
When the rotation angle of the DFrFT matches the spatial chirp rate of $s(n_{\mathrm{x}})$, i.e., $\cot \hat{\alpha} = N_{\mathrm{x}} \tau_2$, the quadratic phase term in \eqref{eq:DFrFT_s} is eliminated, simplifying the expression to the sum of a finite geometric progression, given by
\begin{align}
    S_{\hat{\alpha}}(q) = \frac{c_{\hat{\alpha}}}{\sqrt{N_{\mathrm{x}}}} e^{j\pi \frac{q^2 \cot\hat{\alpha}}{N_{\mathrm{x}}}} \sum_{n_{\mathrm{x}}=-\frac{N_{\mathrm{x}}-1}{2}}^{\frac{N_{\mathrm{x}}-1}{2}} e^{-j2\pi n_{\mathrm{x}} \left( \frac{q \csc\hat{\alpha}}{N_{\mathrm{x}}} - \tau_1 \right)}.
\end{align}
Then, $|S_{\hat{\alpha}}(q)|$ is calculated as
\begin{align}\label{eq:S_abs}
    |S_{\hat{\alpha}}(q)| = \frac{|c_{\hat{\alpha}}|}{\sqrt{N_{\mathrm{x}}}} \times \left| \frac{\sin\left( N_{\mathrm{x}} \pi \left( \frac{q \csc\hat{\alpha}}{N_{\mathrm{x}}} - \tau_1 \right) \right)}{\sin\left( \pi \left( \frac{q \csc\hat{\alpha}}{N_{\mathrm{x}}} - \tau_1 \right) \right)} \right|.
\end{align}
From \eqref{eq:S_abs}, it can be observed that the signal energy focuses into a Dirichlet kernel in the DFrFT domain, revealing the inherent sparsity of the near-field channel in the fractional transform domain.
To exploit this sparsity, we propose to design the dictionary $\mathbf{D}_{\mathrm{x}}$ by using the transformation kernel of the DFrFT.
Specifically, each column of $\mathbf{D}_{\mathrm{x}}$ is defined as a discrete chirp basis function $\mathbf{d}(q, s) \in \mathbb{C}^{N_{\mathrm{x}} \times 1}$, generated by jointly sampling the parameters $\tau_1$ and $\tau_2$, expressed as
\begin{align}
    [\mathbf{d}(q, s)]_i = \frac{1}{\sqrt{N_{\mathrm{x}}}} e^{ j2\pi \left( \tau_1(q) n_{\mathrm{x},i} - \frac{\tau_2(q, s)}{2} n_{\mathrm{x},i}^2 \right)},
\end{align}
where $\tau_1(q)$ and $\tau_2(q, s)$ represent the sampling grids for the linear spatial frequency and the spatial chirp rate, respectively.
For the linear spatial frequency, to ensure complete dictionary coverage and suppress spatial frequency aliasing, $\tau_1$ is orthogonally sampled, where the sampling step must align with the first null of the Dirichlet kernel.
This establishes the orthogonal sampling step as $\frac{1}{N_{\mathrm{x}}}$ \cite{CD-22TCOMM}, yielding $\tau_1(q) = \frac{q}{N_{\mathrm{x}}}$ with $q \in \mathcal{N}_{\mathrm{x}}$.
For the spatial chirp rate, the range of the anisotropic near-field region $r \in [R_{\mathrm{y}}, R_{\mathrm{x}})$ determines the valid interval of $\tau_2(q, s)$ for a specific index $q$, i.e., $\tau_2 \in [\tau_{2,\min}^{(q)}, \tau_{2,\max}^{(q)}]$, where $\tau_{2,\min}^{(q)} \triangleq \frac{2\eta_0^2}{N_{\mathrm{x}}^2}$ and $\tau_{2,\max}^{(q)} = 2\eta_0^2\frac{1 - (\frac{2q}{N_{\mathrm{x}}})^2 }{N_{\mathrm{y}}^2}$.
To define the orthogonal sampling step for $\tau_2(q, s)$, the spatial correlation between adjacent codewords is evaluated as
\begin{align}
    \Xi(\Delta \tau) =& \left| \mathbf{d}^H(q, s_1)\mathbf{d}(q, s_2) \right| \notag\\
    =& \left| \frac{1}{N_{\mathrm{x}}} \sum_{n_{\mathrm{x}} = -\frac{N_{\mathrm{x}}-1}{2}}^{\frac{N_{\mathrm{x}}-1}{2}} e^{-j\pi \Delta \tau n_{\mathrm{x}}^2} \right| \notag\\
    \approx & \left| \frac{1}{N_{\mathrm{x}} \sqrt{2 \Delta \tau}} \int_{-\frac{N_{\mathrm{x}}}{2} \sqrt{2 \Delta \tau}}^{\frac{N_{\mathrm{x}}}{2} \sqrt{2 \Delta \tau}} e^{-j\pi \Delta \tau x^2} d(\sqrt{2 \Delta \tau} x) \right| \notag\\
    =& \frac{\sqrt{C^2\left(\frac{N_{\mathrm{x}}}{2} \sqrt{2 \Delta \tau}\right) + S^2\left(\frac{N_{\mathrm{x}}}{2} \sqrt{2 \Delta \tau}\right)}}{\frac{N_{\mathrm{x}}}{2} \sqrt{2 \Delta \tau}},
\end{align}
where $\Delta \tau \triangleq |\tau_{2,1} - \tau_{2,2}|$ denotes the sampling step.
Based on the properties of Fresnel integrals, $\Xi(\Delta \tau)$ exhibits an oscillatory decay as $\Delta \tau$ increases.
Thus, the sampling step should be aligned with the first local minimum of $\Xi(\Delta \tau)$ to achieve quasi-orthogonality between adjacent codewords.
By setting $\frac{d \Xi(\Delta \tau)}{d\Delta \tau} = 0$, the sampling step is determined to be $\Delta \tau \approx \frac{7}{N_{\mathrm{x}}^2}$, yielding $\tau_2(q, s) = \tau_{2,\min}^{(q)} + s \Delta \tau$ for $s \in \{0, 1, \dots, S_q-1\}$, where $S_q \triangleq \lceil \frac{\tau_{2,\max}^{(q)} - \tau_{2,\min}^{(q)}}{\Delta \tau} \rceil$ is the number of sampling grids for the spatial chirp rate.
Moreover, since the short axis operates under the far-field regime within the anisotropic near-field region, the short-axis dictionary $\mathbf{D}_{\mathrm{y}} \in \mathbb{C}^{N_{\mathrm{y}} \times N_{\mathrm{y}}}$ directly adopts the standard DFT dictionary, whose columns consist of the far-field steering vectors sampled at an orthogonal step of $\frac{1}{N_{\mathrm{y}}}$.

\begin{table}[!htbp]
       \footnotesize
       \centering
       \caption{Codebook Size and Computational Complexity}\label{table:Complexity}
       \begin{tabular}{|c|c|c|}
       \hline
       Codebook (Algorithm)  & Codebook Size   & $\!$Computational Complexity$\!$   \\ \cline{1-3}
       3D-ANF (ANF-OMP)     & $\frac{2\eta_0^2}{7}N (\gamma^2 - 1)$ & \makecell{$\mathcal{O}\{ P\nu N \log(\nu N_{\mathrm{y}})$ \\ $+ P\nu^3 \gamma^2 N \log(\nu N_{\mathrm{x}})\}$} \\ \cline{1-3}
       3D-Polar (P-OMP) \cite{CD-22TCOMM} & $N S_{\mathrm{r}}$     & $\mathcal{O}( P\nu^3N^2 S_{\mathrm{r}})$ \\ \cline{1-3}
       $\!\!$2D-DFT (FF-OMP) \cite{YWang-21TWC}$\!\!$  & $N$  & $\mathcal{O}\{P\nu^2 N \log(\nu^2N)\}$   \\ \hline
       \end{tabular}
\end{table}

Based on the proposed codebook, the OMP-based algorithm can be employed for near-field channel estimation.
Table\,\ref{table:Complexity} summarizes the codebook size and computational complexity comparison of the ANF-OMP, P-OMP \cite{CD-22TCOMM}, and FF-OMP \cite{YWang-21TWC} algorithms, which utilize the proposed 3D-ANF, 3D-polar, and 2D-DFT codebooks, respectively.
$P$ is the number of channel paths, $\nu$ is the oversampling factor, and $S_{\mathrm{r}}$ is the number of distance-domain sampling grids.
It is observed that the computational complexity of the P-OMP algorithm exhibits a quadratic growth with respect to the number of antennas.
By contrast, the proposed ANF-OMP algorithm achieves a significant complexity reduction by exploiting the Kronecker product structure of the anisotropic codebook to decouple the high-dimensional codeword matching into serial fast Fourier transform (FFT) operations, thereby meeting the low-overhead requirements of practical XL-MIMO systems.

\begin{figure}[htbp]
       \centering
       \subfloat[]{
       \includegraphics[width=0.23\textwidth]{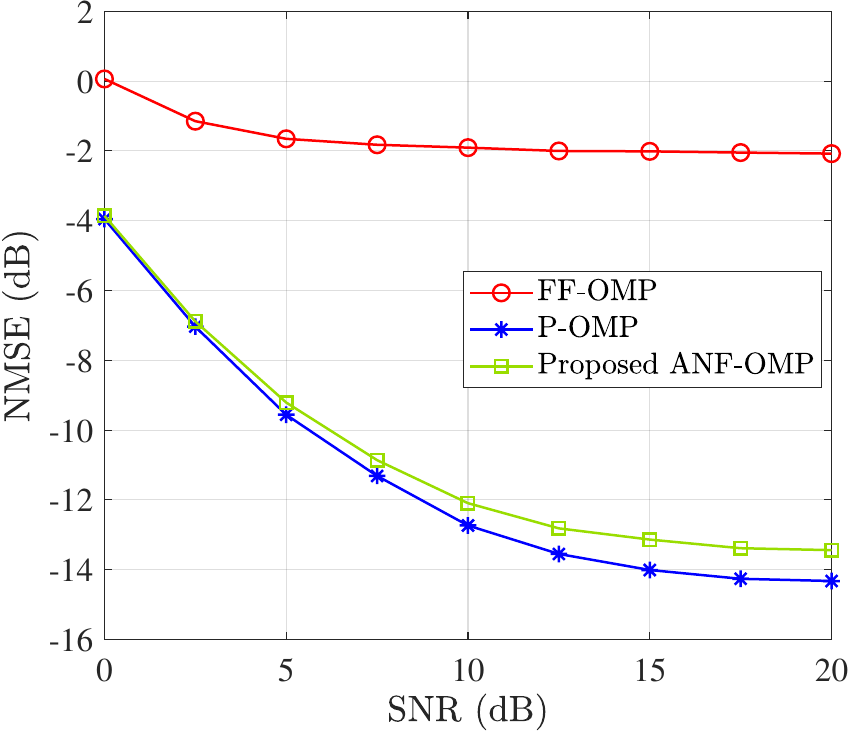}}
       \subfloat[]{
       \includegraphics[width=0.23\textwidth]{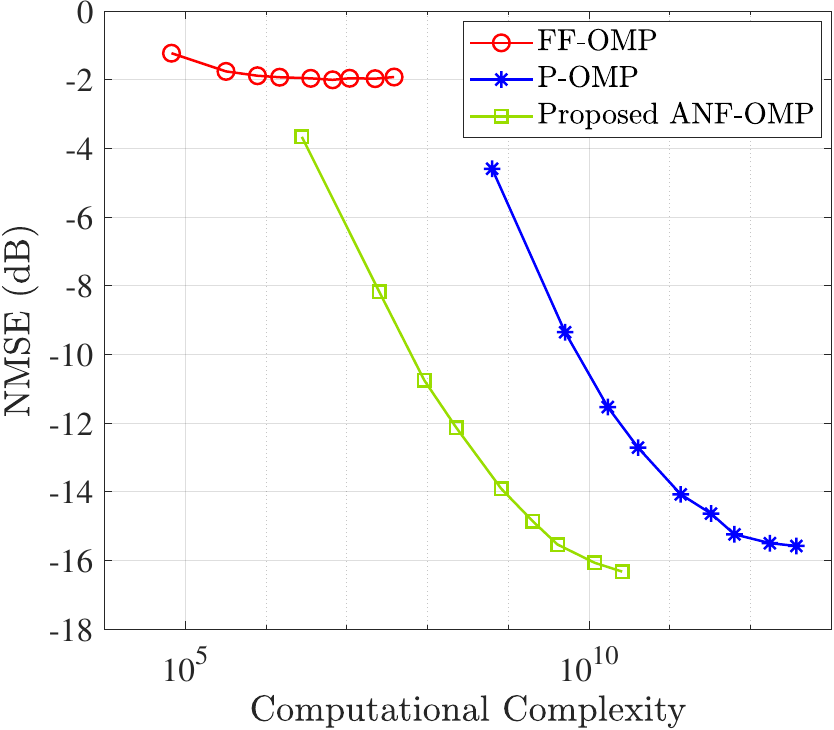}}\\
       \caption{(a) The NMSE of channel estimation versus the SNR for different algorithms, (b) The NMSE of channel estimation versus the computational complexity for different algorithms.}\label{fig:NMSEh}
\end{figure}

\begin{table}[!htbp]
       \footnotesize
       \centering
       \caption{Average Run Time}\label{table:RunTime}
       \begin{tabular}{|c|c|}
       \hline
       Algorithm  & Average Run Time   \\ \cline{1-2}
       ANF-OMP   & $0.19\,\mathrm{s}$ \\ \cline{1-2}
       P-OMP  \cite{CD-22TCOMM}   & $5.56\,\mathrm{s}$ \\ \cline{1-2}
       FF-OMP \cite{YWang-21TWC}   & $0.02\,\mathrm{s}$   \\ \hline
       \end{tabular}
\end{table}

The normalized mean squared error (NMSE) of channel estimation versus the SNR for different algorithms is illustrated in Fig.\,\ref{fig:NMSEh}(a), where $N_{\mathrm{x}} = 128$, $N_{\mathrm{y}} = 16$, $P = 3$, $\nu = 4$, and the carrier frequency is $28\,\mathrm{GHz}$.
For each path, the distance is uniformly distributed over $[1.38, 10.10]\,\mathrm{m}$, while the elevation and azimuth AoAs are independently and uniformly distributed over $[-30^\circ,30^\circ]$.
We perform $1000$ channel realizations at each SNR for the NMSE, defined as $\mathrm{NMSE} = \mathbb{E}(\|\mathbf{h}-\hat{\mathbf{h}}\|^2/\|\mathbf{h}\|^2)$, where $\mathbf{h}$ and $\hat{\mathbf{h}}$ denote the actual and estimated channel vectors, respectively.
Constrained by the planar wavefront assumption, the FF-OMP algorithm fails to capture the quadratic phase within the anisotropic near-field region, leading to NMSE degradation across the considered SNR regime.
Conversely, the P-OMP algorithm achieves superior NMSE performance through joint sampling in the angular and distance domains.
Notably, the proposed ANF-OMP algorithm attains comparable NMSE performance to the P-OMP algorithm, validating its effectiveness.
To evaluate the feasibility in practical deployments, Table\,\ref{table:RunTime} compares the average run time of the considered algorithms.
The FF-OMP algorithm requires the lowest computational overhead, and the P-OMP algorithm incurs a prohibitive computational cost.
In contrast, the proposed ANF-OMP algorithm decouples the 3D search space into separate long- and short-axis operations and utilizes the FFT to accelerate the estimation, thereby achieving an approximately $96.6\%$ reduction in average run time compared to the P-OMP algorithm without compromising estimation accuracy.

Fig.\,\ref{fig:NMSEh}(b) depicts the NMSE of channel estimation versus the computational complexity for different algorithms under the same system configuration as Fig.\,\ref{fig:NMSEh}(a).
% which is varied by adjusting the oversampling factor $\nu$.
As can be shown, the FF-OMP algorithm suffers from a severe error floor due to physical model mismatches, thereby failing to improve estimation accuracy even with additional computational resources.
Although the P-OMP algorithm can achieve high estimation accuracy, its excessive computational overhead limits its application in practical XL-MIMO systems.
Compared to the P-OMP algorithm, the proposed ANF-OMP algorithm maintains comparable NMSE performance with substantially lower computational complexity, establishing its practical advantage.

\section{Conclusion}\label{s:Conclusion}
In this paper, we investigated the anisotropic near-field characteristics, fundamental limits, and channel estimation for non-square UPA-enabled XL-MIMO systems.
Based on the derived effective beamfocusing distances for both the long and short axes of the array, the radiation space was partitioned into the fully near-field, the anisotropic near-field, and the far-field regions.
The analyses revealed that the anisotropic region dominates as the array aspect ratio increases.
An asymptotic closed-form EDoF expression was also obtained, demonstrating that the distance-domain spatial multiplexing capability is governed by the long-axis aperture in the large array aspect ratio regime.
Furthermore, we derived the closed-form distance estimation CRB and the 3D PEB, thereby identifying the optimal array aspect ratio that minimizes the 3D positioning error.
Finally, by exploiting the anisotropic wavefront properties, we proposed a 3D-ANF codebook and a low-complexity ANF-OMP algorithm.
Numerical results verified that the proposed algorithm achieves estimation accuracy comparable to the P-OMP algorithm with significantly reduced computational overhead.

\appendices
\section{Proof of Theorem \ref{the:Ry}}\label{proof:the:Ry}
Let the constant $\eta_0$ denote the primary root of the half-power equation $\frac{C^2(\eta_{\mathrm{y}}) + S^2(\eta_{\mathrm{y}})}{\eta_{\mathrm{y}}^2} = 0.5$.
According to the definition of the $3\,\mathrm{dB}$ beam depth, the effective beamfocusing range of the short axis satisfies $g_{\mathrm{y}}(r_{\mathrm{F}}, u_{\mathrm{y}}) \geq 0.5$.
% Since $g_{\mathrm{y}}(r_{\mathrm{F}}, u_{\mathrm{y}})$ decreases monotonically with respect to $\eta_{\mathrm{y}}$, this inequality is equivalent to
Considering the main-lobe region and the primary half-power crossing of $\frac{C^2(\eta_{\mathrm{y}}) + S^2(\eta_{\mathrm{y}})}{\eta_{\mathrm{y}}^2}$, the condition $g_{\mathrm{y}}(r_{\mathrm{F}}, u_{\mathrm{y}}) \geq 0.5$ is equivalent to
\begin{align}
    \frac{N_{\mathrm{y}}^2 d^2 (1-u_{\mathrm{y}}^2)}{2\lambda}\left|\frac{1}{r_{\mathrm{F}}} - \frac{1}{r}\right| \leq \eta_0^2.
\end{align}
Consequently, the beamfocusing range of the short axis in the distance domain is given by
\begin{align}
    -\frac{2\lambda \eta_0^2}{N_{\mathrm{y}}^2 d^2 (1-u_{\mathrm{y}}^2)} \leq \frac{1}{r_{\mathrm{F}}} - \frac{1}{r} \leq \frac{2\lambda \eta_0^2}{N_{\mathrm{y}}^2 d^2 (1-u_{\mathrm{y}}^2)}.
\end{align}
Rearranging this inequality yields $r_{\mathrm{lower}} \leq r \leq r_{\mathrm{upper}}$, where $r_{\mathrm{lower}}$ and $r_{\mathrm{upper}}$ denote the lower and upper bounds of the beamfocusing range in the distance domain, respectively, which are expressed as
\begin{subequations}
    \begin{align}
        r_{\mathrm{lower}} =& \frac{r_{\mathrm{F}} N_{\mathrm{y}}^2 d^2 (1-u_{\mathrm{y}}^2)}{N_{\mathrm{y}}^2 d^2 (1-u_{\mathrm{y}}^2) + 2\lambda \eta_0^2 r_{\mathrm{F}}}, \\
        r_{\mathrm{upper}} =& \frac{r_{\mathrm{F}} N_{\mathrm{y}}^2 d^2 (1-u_{\mathrm{y}}^2)}{N_{\mathrm{y}}^2 d^2 (1-u_{\mathrm{y}}^2) - 2\lambda \eta_0^2 r_{\mathrm{F}}}.
    \end{align}
\end{subequations}
Accordingly, the $3\,\mathrm{dB}$ beam depth of the short axis $r_{\mathrm{BD,y}}$ is derived as
\begin{align}\label{eq:r_BDy}
    r_{\mathrm{BD,y}} = \frac{4\lambda \eta_0^2 r_{\mathrm{F}}^2 N_{\mathrm{y}}^2 d^2 (1-u_{\mathrm{y}}^2)}{\left[ N_{\mathrm{y}}^2 d^2 (1-u_{\mathrm{y}}^2) \right]^2 - 4\lambda^2 \eta_0^4 r_{\mathrm{F}}^2}.
\end{align}

As the focal distance $r_{\mathrm{F}}$ increases and reaches a threshold, the short axis loses its beamfocusing capability, yielding an infinite beam depth, i.e., $r_{\mathrm{BD,y}} \rightarrow \infty$.
Mathematically, this condition is equivalent to setting the denominator of \eqref{eq:r_BDy} to zero.
Solving this equality yields the effective beamfocusing distance $R_{\mathrm{y}}$ as presented in \eqref{eq:Ry}, which completes the proof.

\section{Proof of Theorem \ref{the:K}}\label{proof:the:K}
The effective beamfocusing distance of the entire array, i.e., $R_{\mathrm{array}}$, is defined such that when the focal distance is $r_{\mathrm{F}} \geq R_{\mathrm{array}}$, the $3\,\mathrm{dB}$ beam depth of the array approaches infinity.
That is, as the observation distance approaches infinity, the normalized array gain remains $0.5$, yielding
\begin{align}\label{eq:g-R_array}
    g_{\mathrm{x}}(R_{\mathrm{array}},0) \times g_{\mathrm{y}}(R_{\mathrm{array}},0) = 0.5.
\end{align}
In this case, $z_{\mathrm{eff}} = \frac{1}{R_{\mathrm{array}}}$, $\eta_{\mathrm{x}} = \sqrt{\frac{N_{\mathrm{x}}^2 d^2}{2\lambda R_{\mathrm{array}}}}$, and $\eta_{\mathrm{y}} = \sqrt{\frac{N_{\mathrm{y}}^2 d^2}{2\lambda R_{\mathrm{array}}}}$, which implies $\eta_{\mathrm{x}} = \gamma \eta_{\mathrm{y}}$. 
Based on \eqref{eq:gain_Fresnel} and introducing the gain function $F(v) = \frac{C^2(v) + S^2(v)}{v^2}$ defined by the Fresnel integrals, solving \eqref{eq:g-R_array} is equivalent to solving the following transcendental equation
\begin{align}\label{eq:F-eta_x-equation}
    F(\eta_{\mathrm{x}}) \times F\left(\frac{\eta_{\mathrm{x}}}{\gamma}\right) = 0.5.
\end{align}
Similarly, $R_{\mathrm{x}}$ is the root of the equation $F(\eta_0) = 0.5$, where $\eta_0 = \sqrt{\frac{N_{\mathrm{x}}^2 d^2}{2\lambda R_{\mathrm{x}}}}$.
Applying the Maclaurin expansion to the gain function $F(v)$ yields
\begin{align}\label{eq:Fv-Maclaurin}
    F(v) \approx & \frac{\left\{\int_0^v \left(1 - \frac{\pi^2}{8}t^4 \right) dt\right\}^2 + \left\{\int_0^v \frac{\pi}{2}t^2 dt\right\}^2}{v^2} \notag\\
    =& 1 - \frac{\pi^2}{45}v^4.
\end{align}
Substituting $v = \frac{\eta_{\mathrm{x}}}{\gamma}$ into \eqref{eq:Fv-Maclaurin}, we obtain
\begin{align}\label{eq:F-eta_x-gamma}
    F\left(\frac{\eta_{\mathrm{x}}}{\gamma}\right) = 1 - \frac{\pi^2}{45\gamma^4}\eta_{\mathrm{x}}^4.
\end{align}
Substituting \eqref{eq:F-eta_x-gamma} into \eqref{eq:F-eta_x-equation} and further applying the Maclaurin expansion leads to $F(\eta_{\mathrm{x}}) \approx 0.5 + \frac{\pi^2}{90 \gamma^4} \eta_{\mathrm{x}}^4$.

Let $\eta_{\mathrm{x}} = \eta_0 + \Delta \eta$, where $\Delta \eta$ denotes the deviation of $\eta_{\mathrm{x}}$ from $\eta_0$. Thus, we have
\begin{subequations}
\begin{align}
    F(\eta_{\mathrm{x}}) \overset{\mathrm{(b)}}{\approx}& 0.5 + F'(\eta_0) \Delta \eta, \\
    F(\eta_0 + \Delta \eta) \approx& 0.5 + \frac{\pi^2}{90\gamma^4}(\eta_0 + \Delta \eta)^4,
\end{align}
\end{subequations}
where the approximation $\mathrm{(b)}$ follows from the first-order Taylor expansion, and $F'(v)$ is the first derivative of $F(v)$.
Based on the properties of the Fresnel integrals, $F'(\eta_0)$ is strictly negative. 
Since $F(\eta_{\mathrm{x}}) = F(\eta_0 + \Delta \eta)$, we obtain $\Delta \eta \approx \frac{\pi^2\eta_0^4}{90F'(\eta_0)\gamma^4}$.
Consequently, $K(\gamma)$ can be formulated as
\begin{align}
    K(\gamma) &= 1 - \frac{R_{\mathrm{x}}}{R_{\mathrm{array}}} \notag\\
    &= 1 - \frac{\eta_{\mathrm{x}}^2}{\eta_0^2} \notag\\
    &\overset{\mathrm{(c)}}{\approx} -\frac{2\Delta \eta}{\eta_0} \notag\\
    &\approx \frac{\pi^2\eta_0^3}{45 |F'(\eta_0)| \gamma^4},
\end{align}
where the approximation $\mathrm{(c)}$ is derived from $\eta_{\mathrm{x}} = \eta_0 + \Delta \eta$ and the first-order Taylor expansion.
Furthermore, $\bar{K}(\gamma)$ can be expressed as
\begin{align}
    \bar{K}(\gamma) =& \left(1 - \frac{1}{\gamma^2}\right) \frac{R_{\mathrm{x}}}{R_{\mathrm{array}}} \notag\\
    =& \left(1 - \frac{1}{\gamma^2}\right) \times \left(1 - \frac{\varsigma}{\gamma^4}\right) \notag\\
    =& 1 - \frac{1}{\gamma^2} - \mathcal{O}\left(\frac{1}{\gamma^4}\right),
\end{align}
where $\varsigma \triangleq \frac{\pi^2\eta_0^3}{45 |F'(\eta_0)|}$. This completes the proof.

\section{Proof of Theorem \ref{the:EDoF}}\label{proof:the:EDoF}
First, the trace of $\mathbf{R}$ is given by
\begin{align}
    \mathrm{tr}(\mathbf{R}) =& \int_{0}^{\zeta_{\max}} \frac{1}{\zeta_{\max}} d\zeta \notag\\
    =& 1.
\end{align}
Then, $\mathrm{tr}(\mathbf{R}^2)$ along the broadside direction is calculated as
\begin{align}\label{eq:R_tr0}
    \mathrm{tr}(\mathbf{R}^2) =& \frac{1}{\zeta_{\max}^2} \int_{0}^{\zeta_{\max}} \!\!\! \int_{0}^{\zeta_{\max}} \!\!\! g_{\mathrm{x}}(\zeta_1 - \zeta_2, 0) g_{\mathrm{y}}(\zeta_1 - \zeta_2, 0) d\zeta_1 d\zeta_2 \notag\\
    =& \frac{2}{\zeta_{\max}^2} \int_{0}^{\zeta_{\max}} (\zeta_{\max} - \zeta) g_{\mathrm{x}}(\zeta,0) g_{\mathrm{y}}(\zeta,0) d\zeta.
\end{align}
As $\gamma \gg 1$, \eqref{eq:R_tr0} can be asymptotically approximated as
\begin{align}\label{eq:R2_tr_ANF}
    \mathrm{tr}(\mathbf{R}^2) =& \frac{1}{\zeta_{\max}^2} \int_{0}^{\zeta_{\max}} \int_{0}^{\zeta_{\max}} g_{\mathrm{x}}(\zeta_1 - \zeta_2,0) d\zeta_1 d\zeta_2 \notag\\
    =& \frac{2}{\zeta_{\max}^2} \int_{0}^{\zeta_{\max}} (\zeta_{\max} - \zeta) g_{\mathrm{x}}(\zeta,0) d\zeta \notag\\
    \approx & \frac{1}{\gamma N \xi} \left\{ \ln (\gamma N \xi) + \gamma_{\mathrm{E}} + \ln(2\pi) - 1 \right\},
\end{align}
where $\gamma_{\mathrm{E}}$ denotes the Euler-Mascheroni constant, $\xi = \frac{d^2}{2\lambda r_{\min}}$.
Substituting \eqref{eq:R2_tr_ANF} into \eqref{eq:EDoF0} yields the asymptotic closed-form expression for the EDoF in \eqref{eq:EDoF}, which completes the proof.

\section{Proof of Theorem \ref{the:CRB_r}}\label{proof:the:CRB_r}
Given the statistical properties of the AWGN, the received signal $\mathbf{y}_1$ follows a complex Gaussian distribution, i.e., $\mathbf{y}_1 \sim \mathcal{CN}(\boldsymbol{\mu}(\boldsymbol{\vartheta}), \sigma_1^2 \mathbf{I})$, where $\boldsymbol{\mu}(\boldsymbol{\vartheta}) \triangleq \alpha \sqrt{N} \mathbf{a}_{\mathrm{NF}}(r, u_{\mathrm{x}}, u_{\mathrm{y}})$.
The corresponding log-likelihood function is given by
\begin{align}
    \ln p(\mathbf{y}_1 | \boldsymbol{\vartheta}) = -N \ln(\pi \sigma_1^2) -\frac{1}{\sigma_1^2} \left\| \mathbf{y}_1 - \boldsymbol{\mu}(\boldsymbol{\vartheta}) \right\|^2,
\end{align}
where $\boldsymbol{\vartheta} \triangleq [r, u_{\mathrm{x}}, u_{\mathrm{y}}, \varphi_{\alpha}]^T$ denotes the unknown parameter vector. 
According to the Slepian-Bangs formula \cite{Slepian-54TIT}, the Fisher information matrix (FIM) $\mathbf{J} \in \mathbb{R}^{4 \times 4}$ is expressed as
\begin{align}
    \mathbf{J}_{(i, j)} =& -\mathbb{E} \left\{ \frac{\partial^2 \ln p(\mathbf{y}_1 | \boldsymbol{\vartheta})}{\partial [\boldsymbol{\vartheta}]_i \partial [\boldsymbol{\vartheta}]_j} \right\} \notag\\
    =& \frac{2}{\sigma_1^2} \Re \left\{ \frac{\partial \boldsymbol{\mu}^H(\boldsymbol{\vartheta})}{\partial [\boldsymbol{\vartheta}]_i} \frac{\partial \boldsymbol{\mu}(\boldsymbol{\vartheta})}{\partial [\boldsymbol{\vartheta}]_j} \right\} \notag\\
    \overset{\mathrm{(d)}}{=}& \frac{2}{\sigma_1^2} \Re \left\{ \sum_{n=1}^{N} \left( j \frac{\partial [\boldsymbol{\psi}]_n}{\partial [\boldsymbol{\vartheta}]_i} [\boldsymbol{\mu}(\boldsymbol{\vartheta})]_n \right)^H \! \left( j \frac{\partial [\boldsymbol{\psi}]_n}{\partial [\boldsymbol{\vartheta}]_j} [\boldsymbol{\mu}(\boldsymbol{\vartheta})]_n \right) \! \right\} \notag\\
    \overset{\mathrm{(e)}}{=}& 2\rho_1 \sum_{n=1}^{N} \frac{\partial [\boldsymbol{\psi}]_n}{\partial [\boldsymbol{\vartheta}]_i} \frac{\partial [\boldsymbol{\psi}]_n}{\partial [\boldsymbol{\vartheta}]_j},
\end{align}
where $\boldsymbol{\psi} \triangleq \varphi_{r}(n) + \varphi_{\alpha}$ denotes the phase of the received signal with $\varphi_{r}(n) \triangleq -\frac{2\pi}{\lambda}r_{(n_{\mathrm{x}}, n_{\mathrm{y}})}$, and $\rho_1 = \frac{|\alpha|^2}{\sigma_1^2}$ is the received SNR.
The procedure $\mathrm{(d)}$ is obtained by substituting $[\boldsymbol{\mu}(\boldsymbol{\vartheta})]_n = |\alpha| e^{j [\boldsymbol{\psi}]_n}$, and the procedure $\mathrm{(e)}$ is derived based on $(j [\boldsymbol{\mu}(\boldsymbol{\vartheta})]_n)^H = -j [\boldsymbol{\mu}(\boldsymbol{\vartheta})]_n^*$ and $|[\boldsymbol{\mu}(\boldsymbol{\vartheta})]_n|^2 = |\alpha|^2$.
Subsequently, applying the Schur complement yields the equivalent FIM submatrix $\tilde{\mathbf{J}} \in \mathbb{R}^{3 \times 3}$, which contains only the spatial parameters $\tilde{\boldsymbol{\vartheta}} \triangleq [r, u_{\mathrm{x}}, u_{\mathrm{y}}]^T$, formulated as
\begin{align}
    \tilde{\mathbf{J}}_{(i, j)} =& \mathbf{J}_{(i, j)} - \frac{\mathbf{J}_{(i, \varphi_{\alpha})} \mathbf{J}_{(j, \varphi_{\alpha})}}{\mathbf{J}_{(\varphi_{\alpha},\varphi_{\alpha})}} \notag\\
    =& 2N\rho_1 \left\{ \frac{1}{N}\sum_{n=1}^{N} \frac{\partial \varphi_{r}(n)}{\partial [\boldsymbol{\vartheta}]_i} \frac{\partial \varphi_{r}(n)}{\partial [\boldsymbol{\vartheta}]_j} \right.  \notag\\
    & \hspace{3em} \left. - \left(\frac{1}{N}\sum_{n=1}^{N} \frac{\partial \varphi_{r}(n)}{\partial [\boldsymbol{\vartheta}]_i}\right) \times \left(\frac{1}{N}\sum_{n=1}^{N} \frac{\partial \varphi_{r}(n)}{\partial [\boldsymbol{\vartheta}]_j}\right) \right\} \notag\\
    =& 2N\rho_1 \mathrm{Cov}\left(\frac{\partial \varphi_{r}}{\partial [\boldsymbol{\vartheta}]_i},\frac{\partial \varphi_{r}}{\partial [\boldsymbol{\vartheta}]_j}\right),
\end{align}
where $\mathbf{J}_{(i, \varphi_{\alpha})} = 2\rho_1 \sum_{n=1}^{N} \frac{\partial \varphi_{r}(n)}{\partial [\boldsymbol{\vartheta}]_i}$ and $\mathbf{J}_{(\varphi_{\alpha},\varphi_{\alpha})} = 2N\rho_1$.

Applying the continuous aperture approximation to the distance $r_{(n_{\mathrm{x}}, n_{\mathrm{y}})}$ in \eqref{eq:r_n} yields
\begin{align}
    r(x, y) \approx r - (x u_{\mathrm{x}} + y u_{\mathrm{y}}) + \frac{x^2(1-u_{\mathrm{x}}^2) + y^2(1-u_{\mathrm{y}}^2)}{2r},
\end{align}
where $x = n_{\mathrm{x}}d \in [-\frac{L_{\mathrm{x}}}{2}, \frac{L_{\mathrm{x}}}{2}]$, $y = n_{\mathrm{y}}d \in [-\frac{L_{\mathrm{y}}}{2}, \frac{L_{\mathrm{y}}}{2}]$, $L_{\mathrm{x}}$ and $L_{\mathrm{y}}$ denote the array apertures along the long and short axes, respectively.
Thus, we have $\varphi_{r}(x, y) = -\frac{2\pi}{\lambda}r(x, y)$, and the first-order partial derivatives of $\varphi_{r}(x, y)$ with respect to $r$, $u_{\mathrm{x}}$, and $u_{\mathrm{y}}$ are derived as
\begin{subequations}
\begin{align}
    \frac{\partial \varphi_{r}}{\partial r} &= -\frac{2\pi}{\lambda} + \frac{\pi}{\lambda r^2} \left\{ x^2(1-u_{\mathrm{x}}^2) + y^2(1-u_{\mathrm{y}}^2) \right\}, \label{eq:partial_r} \\
    \frac{\partial \varphi_{r}}{\partial u_{\mathrm{x}}} &= \frac{2\pi}{\lambda} \left( x + \frac{x^2 u_{\mathrm{x}}}{r} \right), \label{eq:partial_ux} \\
    \frac{\partial \varphi_{r}}{\partial u_{\mathrm{y}}} &= \frac{2\pi}{\lambda} \left( y + \frac{y^2 u_{\mathrm{y}}}{r} \right). \label{eq:partial_uy}
\end{align}
\end{subequations}
Considering $x$ and $y$ as independent, uniformly distributed variables over the array aperture, we have $\mathbb{E}(x) = \mathbb{E}(x^3) = 0$, $\mathbb{E}(y) = \mathbb{E}(y^3) = 0$, $\mathrm{Var}(x) = \frac{L_{\mathrm{x}}^2}{12}$, $\mathrm{Var}(x^2) = \frac{L_{\mathrm{x}}^4}{180}$, $\mathrm{Var}(y) = \frac{L_{\mathrm{y}}^2}{12}$, and $\mathrm{Var}(y^2) = \frac{L_{\mathrm{y}}^4}{180}$. 
Consequently, the elements of $\tilde{\mathbf{J}}$ can be computed as
\begin{subequations}
\begin{align}
    \tilde{\mathbf{J}}_{(u_{\mathrm{x}},u_{\mathrm{y}})} =& 0, \\
    \tilde{\mathbf{J}}_{(r,r)} =& 2N\rho_1 \mathrm{Var}\left( \frac{\partial \varphi_{r}}{\partial r} \right) \notag\\
    =& 2N\rho_1 \left(\frac{\pi}{\lambda r^2}\right)^2 \left[ (1-u_{\mathrm{x}}^2)^2\frac{L_{\mathrm{x}}^4}{180} + (1-u_{\mathrm{y}}^2)^2\frac{L_{\mathrm{y}}^4}{180} \right], \\
    \tilde{\mathbf{J}}_{(r,u_{\mathrm{x}})} =& 2N\rho_1 \mathrm{Cov}\left( \frac{\partial \varphi_{r}}{\partial r}, \frac{\partial \varphi_{r}}{\partial u_{\mathrm{x}}} \right) \notag\\
    =& 2N\rho_1 \frac{2\pi^2}{\lambda^2 r^3} u_{\mathrm{x}}(1-u_{\mathrm{x}}^2) \frac{L_{\mathrm{x}}^4}{180}, \\
    \tilde{\mathbf{J}}_{(u_{\mathrm{x}},u_{\mathrm{x}})} =& 2N\rho_1 \mathrm{Var}\left( \frac{\partial \varphi_{r}}{\partial u_{\mathrm{x}}} \right) \notag\\
    =& 2N\rho_1 \left(\frac{2\pi}{\lambda}\right)^2 \left( \frac{L_{\mathrm{x}}^2}{12} + \frac{u_{\mathrm{x}}^2}{r^2} \frac{L_{\mathrm{x}}^4}{180} \right).
\end{align}
\end{subequations}
By applying the Schur complement, the Fisher information for $r$, i.e., $J_{\mathrm{r}}$, is obtained as \eqref{eq:J_r}, shown at the top of the next page.
\begin{figure*}
\begin{align}\label{eq:J_r}
    J_{\mathrm{r}} =& \tilde{\mathbf{J}}_{(r,r)} - 
    \begin{bmatrix} \tilde{\mathbf{J}}_{(r,u_{\mathrm{x}})} & \tilde{\mathbf{J}}_{(r,u_{\mathrm{y}})} \end{bmatrix}
    \begin{bmatrix} \tilde{\mathbf{J}}_{(u_{\mathrm{x}},u_{\mathrm{x}})} & 0 \\ 0 & \tilde{\mathbf{J}}_{(u_{\mathrm{y}},u_{\mathrm{y}})} \end{bmatrix}^{-1}
    \begin{bmatrix} \tilde{\mathbf{J}}_{(r,u_{\mathrm{x}})} \\ \tilde{\mathbf{J}}_{(r,u_{\mathrm{y}})} \end{bmatrix} \notag\\
    =& \tilde{\mathbf{J}}_{(r,r)} - \frac{\tilde{\mathbf{J}}_{(r,u_{\mathrm{x}})}^2}{\tilde{\mathbf{J}}_{(u_{\mathrm{x}},u_{\mathrm{x}})}} - \frac{\tilde{\mathbf{J}}_{(r,u_{\mathrm{y}})}^2}{\tilde{\mathbf{J}}_{(u_{\mathrm{y}},u_{\mathrm{y}})}} \notag\\
    =& \frac{2N\rho_1\pi^2}{\lambda^2 r^4} \left\{ \frac{(1-u_{\mathrm{x}}^2)^2 L_{\mathrm{x}}^4}{180} \frac{1}{1 + \frac{u_{\mathrm{x}}^2 L_{\mathrm{x}}^2}{15 r^2}} + \frac{(1-u_{\mathrm{y}}^2)^2 L_{\mathrm{y}}^4}{180} \frac{1}{1 + \frac{u_{\mathrm{y}}^2 L_{\mathrm{y}}^2}{15 r^2}} \right\}
\end{align}
{\noindent}\rule[-5pt]{17.5cm}{0.05em}
\end{figure*}
Finally, the CRB for distance estimation is given by the inverse of $J_{\mathrm{r}}$, shown as \eqref{eq:CRB_r}, which completes the proof.

\section{Proof of Theorem \ref{the:PEB}}\label{proof:the:PEB}
By constructing the Jacobian matrix $\mathbf{H} \triangleq \frac{\partial \mathbf{p}_{\mathrm{u}}}{\partial \tilde{\boldsymbol{\vartheta}}}$ that maps the estimated parameter vector $\tilde{\boldsymbol{\vartheta}} = [r, u_{\mathrm{x}}, u_{\mathrm{y}}]^T$ to the position vector $\mathbf{p}_{\mathrm{u}} = [x, y, z]^T$, the CRB for the estimated parameter can be transformed into the PEB, as follows \cite{Abu-Shaban-18TWC}
\begin{align}\label{eq:PEB_proof}
    \mathrm{PEB}^2 =& \mathrm{tr}\left(\mathbf{H} \mathrm{CRB}_{\tilde{\boldsymbol{\vartheta}}} \mathbf{H}^T \right) \notag\\
    =& \mathrm{CRB}_{r} + \frac{r^2}{u_{\mathrm{z}}^2} \{ (1-u_{\mathrm{y}}^2)\mathrm{CRB}_{u_{\mathrm{x}}}  \notag\\
    & + (1-u_{\mathrm{x}}^2)\mathrm{CRB}_{u_{\mathrm{y}}} + 2u_{\mathrm{x}}u_{\mathrm{y}}\mathrm{Cov}(u_{\mathrm{x}}, u_{\mathrm{y}})\},
\end{align}
where $u_{\mathrm{z}} \triangleq \sqrt{1 - u_{\mathrm{x}}^2 - u_{\mathrm{y}}^2}$, $\mathrm{CRB}_{u_{\mathrm{x}}}$ and $\mathrm{CRB}_{u_{\mathrm{y}}}$ represent the CRBs of $u_{\mathrm{x}}$ and $u_{\mathrm{y}}$, respectively.
Similar to the derivation of $\mathrm{CRB}_{r}$ in \eqref{eq:CRB_r}, we obtain
\begin{subequations}\label{eq:CRB_xy}
\begin{align}
    \mathrm{CRB}_{u_{\mathrm{x}}} =& \left(\tilde{\mathbf{J}}_{(u_{\mathrm{x}},u_{\mathrm{x}})} - \frac{\tilde{\mathbf{J}}_{(r,u_{\mathrm{x}})}^2}{\tilde{\mathbf{J}}_{(r,r)} - \frac{\tilde{\mathbf{J}}_{(r,u_{\mathrm{y}})}^2}{\tilde{\mathbf{J}}_{(u_{\mathrm{y}},u_{\mathrm{y}})}}}\right)^{-1}, \\
    \mathrm{CRB}_{u_{\mathrm{y}}} =& \left(\tilde{\mathbf{J}}_{(u_{\mathrm{y}},u_{\mathrm{y}})} - \frac{\tilde{\mathbf{J}}_{(r,u_{\mathrm{y}})}^2}{\tilde{\mathbf{J}}_{(r,r)} - \frac{\tilde{\mathbf{J}}_{(r,u_{\mathrm{x}})}^2}{\tilde{\mathbf{J}}_{(u_{\mathrm{x}},u_{\mathrm{x}})}}}\right)^{-1},
\end{align}
\end{subequations}
where $\tilde{\mathbf{J}}_{(u_{\mathrm{y}}, u_{\mathrm{y}})} = 2N\rho_1 (\frac{2\pi}{\lambda})^2 \times ( \frac{L_{\mathrm{y}}^2}{12} + \frac{u_{\mathrm{y}}^2}{r^2} \frac{L_{\mathrm{y}}^4}{180} )$.
Along the broadside direction, where $u_{\mathrm{x}} = 0$, $u_{\mathrm{y}} = 0$, and $u_{\mathrm{z}} = 1$, \eqref{eq:PEB_proof} simplifies to
\begin{align}
    \mathrm{PEB}^2 =& \mathrm{CRB}_{r} + r^2 \mathrm{CRB}_{u_{\mathrm{x}}} + r^2 \mathrm{CRB}_{u_{\mathrm{y}}} \notag\\
    =& \frac{90 \lambda^2 r^4}{\pi^2 d^4 \rho_1 N^3 (\gamma^2 + \gamma^{-2})} + \frac{3 \lambda^2 r^2}{2\pi^2 \rho_1 N^2 d^2} (\gamma+\gamma^{-1}).
\end{align}
The first derivative of $\mathrm{PEB}^2$ with respect to $\gamma$ is computed as
\begin{align}
    \frac{\partial \mathrm{PEB}^2}{\partial \gamma} = \left(1 - \gamma^{-2}\right) \times \left\{ \kappa_2 - \kappa_1 \frac{2\gamma(1+\gamma^{-2})}{\left(\gamma^2 + \gamma^{-2}\right)^2} \right\},
\end{align}
where $\kappa_1 \triangleq \frac{90\lambda^2 r^4}{\pi^2 d^4 \rho_1 N^3}$ and $\kappa_2 \triangleq \frac{3\lambda^2 r^2}{2\pi^2\rho_1 N^2 d^2}$.
By setting $\frac{\partial \mathrm{PEB}^2}{\partial \gamma} = 0$, we obtain two stationary points, namely $\gamma_1$ and $\gamma_2 = 1$, where $\gamma_1$ is the root of the equation $\frac{2\gamma(1+\gamma^{-2})}{(\gamma^2 + \gamma^{-2})^2} = \frac{\kappa_2}{\kappa_1}$.

By defining $f(\gamma) \triangleq \frac{2(\gamma + \gamma^{-1})}{(\gamma^2 + \gamma^{-2})^2}$, the first derivative of $f(\gamma)$ with respect to $\gamma$ can be written as
\begin{align}
    \frac{\partial f(\gamma)}{\partial \gamma} =& \frac{2(1 - \gamma^{-2})(\gamma^2 + \gamma^{-2})^2}{(\gamma^2 + \gamma^{-2})^4} \notag\\
    & - \frac{4(\gamma + \gamma^{-1}) (\gamma^2 + \gamma^{-2})(2\gamma - 2\gamma^{-3})}{(\gamma^2 + \gamma^{-2})^4} \notag\\
    =& \frac{-6(\gamma^2 - \gamma^{-4}) - 10(1 - \gamma^{-2})}{(\gamma^2 + \gamma^{-2})^3}.
\end{align}
For $\gamma > 1$, it is straightforward to verify that $\frac{\partial f(\gamma)}{\partial \gamma} < 0$, meaning that $f(\gamma)$ is monotonically decreasing within the interval $\gamma \in (1, +\infty)$, with $f(1) = 1$ and $\lim_{\gamma \to \infty} f(\gamma) = 0$.
Given that $\frac{\kappa_2}{\kappa_1} = \frac{N d^2}{60 r^2}$, we have $\frac{\kappa_2}{\kappa_1} \in (0, 1)$ for any distance $r > r_{\mathrm{th}}$, where $r_{\mathrm{th}} \triangleq \sqrt{\frac{N d^2}{60}}$.
Therefore, the equation $f(\gamma) = \frac{\kappa_2}{\kappa_1}$ is guaranteed to possess a unique real root $\gamma_1 > 1$.
Since the derivative $\frac{\partial \mathrm{PEB}^2}{\partial \gamma}$ changes sign from negative to positive at $\gamma_1$, this stationary point corresponds to the unique minimum point for $r > r_{\mathrm{th}}$.

To find the asymptotic closed-form expression for $\gamma_1$ when $\gamma \gg 1$, we introduce the infinitesimal variable $\epsilon = \gamma^{-1}$.
By applying the Maclaurin expansion, $f(\epsilon^{-1})$ can be rewritten as
\begin{align}\label{eq:epsilon}
    f(\epsilon^{-1}) =& \frac{2\epsilon^3(1 + \epsilon^2)}{(1 + \epsilon^4)^2} \notag\\
    =& 2\epsilon^3 + 2\epsilon^5 + \mathcal{O}(\epsilon^7).
\end{align}
Substituting \eqref{eq:epsilon} into the equation $f(\gamma) = \frac{\kappa_2}{\kappa_1}$ yields
\begin{align}
    \epsilon = \beta^{\frac{1}{3}} + \mathcal{O}\left(\beta^{\frac{1}{3}}\right).
\end{align}
where $\beta \triangleq \frac{N d^2}{120 r^2}$.
Then, $\gamma_1$ converges to $\gamma_1 \approx \beta^{-\frac{1}{3}}$.

Next, we evaluate the condition $r \leq r_{\mathrm{th}}$, where $\frac{\partial \mathrm{PEB}^2}{\partial \gamma} \geq 0$ holds for all $\gamma \in [1, +\infty)$.
This monotonic property makes $\gamma_2 = 1$ the unique minimum point for $\mathrm{PEB}^2$ in this regime.
Notably, $r_{\mathrm{th}}$ is typically extremely small in practical deployment scenarios.
For instance, given $N = 2048$ and a carrier frequency of $28\,\mathrm{GHz}$, we have $r_{\mathrm{th}} = 0.031\,\mathrm{m}$, which falls well below the practical operating distance of communication systems and can be neglected.
Consequently, $\gamma_1$ in \eqref{eq:gamma_opt_asy} stands as the unique optimal solution that minimizes the $\mathrm{PEB}^2$, which completes the proof.

\footnotesize
\bibliographystyle{IEEEtran}
\bibliography{reference}
\end{document}